\newcommand\e{{\rm e}}                        
\newcommand\tr{\text{\rm tr}}
\newcommand\diff{\text{\rm d}}
\newcommand\var{\text{\rm var}}
\newcommand\logit{\text{\rm logit}}
\newcommand\KL{\text{\rm KL}}
\newcommand\diag{\text{\rm diag}}
\newcommand\dg{\text{\rm dg}}
\newcommand\VEC{\text{\rm vec}}
\newcommand\vech{v}
\newcommand\VI{\text{VI}}
\newcommand\IW{\text{IW}}
\newcommand\mL{{\mathcal{L}}}
\newcommand\N{{\mathcal{N}}}
\newcommand\G{{\mathcal{G}}}
\newcommand\bone{{\bf 1}}
\begin{document}
\sloppy

\title{Conditionally structured variational Gaussian approximation with importance weights\thanks{Linda Tan and Aishwarya Bhaskaran are supported by the start-up grant R-155-000-190-133. }
}


\author{Linda S. L. Tan  \and
	Aishwarya Bhaskaran \and
	David J. Nott.
}

\authorrunning{Tan, Bhaskaran and Nott} 

\institute{Linda S. L. Tan \at
Department of Statistics and Applied Probability \\
National University of Singapore \\
Tel.: +65-6516-8473\\
Fax: +65-6872-3939\\
\email{statsll@nus.edu.sg}           
\and
Aishwarya Bhaskaran \at
Department of Statistics and Applied Probability \\
National University of Singapore \\
Tel.: +65-6601-6229 \\
Fax: +65-6872-3939\\
\email{staai@nus.edu.sg} 
\and
David J. Nott \at
Department of Statistics and Applied Probability \\
National University of Singapore \\
Tel.: +65-6516-2744 \\
Fax: +65-6872-3939\\
\email{standj@nus.edu.sg} 
}

\date{Received: date / Accepted: date}

\maketitle

\begin{abstract}
We develop flexible methods of deriving variational inference for models with complex latent variable structure. By splitting the variables in these models into ``global" parameters and ``local"  latent variables, we define a class of variational approximations that exploit this partitioning and go beyond Gaussian variational approximation. This approximation is motivated by the fact that in many hierarchical models, there are global variance parameters which determine the scale of local latent variables in their posterior conditional on the global parameters. We also consider parsimonious parametrizations by using conditional independence structure, and improved estimation of the log marginal likelihood and variational density using importance weights. These methods are shown to improve significantly on Gaussian variational approximation methods for a similar computational cost. Application of the methodology is illustrated using generalized linear mixed models and state space models.
\keywords{Gaussian variational approximation \and  Sparse precision matrix \and  Stochastic variational inference \and  Importance weighted lower bound \and  R\'{e}nyi's divergence}
\end{abstract}

\section{Introduction} \label{sec:Introduction}

In many modern statistical applications, it is necessary to model complex dependent data. In these situations, models which employ observation specific latent variables such as random effects and state space models are widely used because of their flexibility, and Bayesian approaches dealing naturally with the hierarchical structure are attractive in principle. However, incorporating observation specific latent variables leads to a parameter dimension increasing with sample size, and standard Bayesian computational methods can be challenging to implement in very high-dimensional settings. For this reason, approximate inference methods are attractive for these models, both in exploratory settings where many models need to be fitted quickly, as well as in applications involving large datasets where exact methods are infeasible. One of the most common approximate inference paradigms is variational inference \citep{Ormerod2010,blei2017}, which is the approach considered here.

Our main contribution is to consider partitioning the unknowns in a local latent variable model into ``global" parameters and ``local" latent variables, and to suggest ways of structuring the dependence in a variational approximation that match the specification of these models.  We go beyond standard Gaussian approximations by defining the variational approximation sequentially, through a marginal density for the global parameters and a conditional density for local parameters given global parameters. Each term in our approximation is Gaussian, but we allow the conditional covariance matrix for the local parameters to depend on the global parameters, which leads to an approximation that is not jointly Gaussian. We are particularly interested in improved inference on global variance and dependence parameters which determine the scale and dependence structure of local latent variables.  With this objective, we suggest a parametrization of our conditional approximation to the local variables that is well-motivated and respects the exact conditional independence structure in the true posterior distribution. Our approximations are parsimonious in terms of the number of required variational parameters, which is important since a high-dimensional variational optimization is computationally burdensome. The methods suggested improve on Gaussian variational approximation methods for a similar computational cost. Besides defining a novel and useful variational family appropriate to local latent variable models, we also employ importance weighted variational inference methods \citep{Burda2016, Domke2018} to further improve the quality of inference, and elaborate further on the connections between this approach and the use of R\'{e}nyi's divergence within the variational optimization \citep{Li2016, Regli2018, Yang2019}.  

Our method is a contribution to the literature on the development of flexible variational families, and there are many interesting existing methods for this task. One fruitful approach is based on normalizing flows \citep{Rezende2015}, where a variational family is defined using an invertible transformation of a random vector with some known density function. To be useful, the transformation should have an easily computed Jacobian determinant. In the original work of \cite{Rezende2015}, compositions of simple flows called radial and planar flows were considered. Later authors have suggested alternatives, such as autoregressive flows \citep{Germain2015}, inverse autoregressive flows \citep{Kingma2016}, and real-valued non-volume preserving transformations \citep{Dinh2017}, among others. \cite{Spantini2018} gives a theoretical framework connecting Markov properties of a target posterior distribution to representations involving transport maps, with normalizing flows being one way to parametrize such mappings. The variational family we consider here can be thought of as a simple autoregressive flow, but carefully constructed to preserve the conditional independence structure in the true posterior and to achieve parsimony in the representation of dependence between local latent variables and global scale parameters. Our work is also related to the hierarchically structured approximations considered in \citet[Section 7.1]{Salimans2013};  these authors also consider other flexible approximations based on mixture models, and a variety of innovative numerical approaches to the variational optimization. \cite{Hoffman2015} propose an approach called structured stochastic variational inference which is applicable in conditionally conjugate models.  Their approach is similar to ours, in the sense that local variables depend on global variables in the variational posterior.  However conditional conjugacy does not hold in the examples we consider.

The methods we describe can be thought of as extending the Gaussian variational approximation (GVA) of \cite{Tan2018}, where parametrization of the variational covariance matrix was considered in terms of a sparse Cholesky factor of the precision matrix. Similar approximations have been considered for state space models in \cite{Archer2016}.  The sparse structure reduces the number of free variational parameters, and allows matching the exact conditional independence structure in the true posterior. \cite{Tan2018b} propose an approach called reparametrized variational Bayes, where the model is reparametrized by applying an invertible affine transformation to the local variables to minimize their posterior dependency on global variables, before applying a mean field approximation. The  affine transformation is obtained by considering a second order Taylor series approximation to the posterior of the local variables conditional on the global variables. One way of improving on Gaussian approximations is to consider mixtures of Gaussians \citep{Jaakkola1998, Salimans2013, Miller2016, Guo2016}.  However, even with a parsimonious parametrization of component densities, a large number of additional variational parameters are added with each mixture component. Other flexible variational families can be formed using copulas \citep{tran+ba15,Han2016,Smith2019}, hierarchical variational models \citep{Ranganath2016} or implicit approaches \citep{Huszar17}.  

We specify the model and notation in Section \ref{sec_model} and introduce the conditionally structured Gaussian variational approximation (CSGVA) in Section \ref{sec_CSGVA}. The algorithm for optimizing the variational parameters is described in Section \ref{sec_optim} and Section \ref{sec_assoc} highlights the association between GVA and CSGVA. Section \ref{sec_impt} describes how CSGVA can be improved using importance weighting. Experimental results and applications to generalized linear mixed models (GLMMs) and state space models are presented in Sections \ref{sec_expt}, \ref{sec_GLMMs} and \ref{sec_SSMs} respectively. Section \ref{sec_conclu} gives some concluding discussion.

\section{Model specification and notation} \label{sec_model}
Let $y = (y_1, \dots, y_n)^T$ be observations from a model with global variables $\theta_G$ and local variables $\theta_L = (b_1, \dots, b_n)^T$, where $b_i$ contains latent variables specific to $y_i$ for $i=1, \dots, n$. Suppose $\theta_G$ is a vector of length $G$ and each $b_i$ is a vector of length $L$. Let $\theta = (\theta_G^T, \theta_L^T)^T$. We consider models where the joint density is of the form 
\begin{equation*}
\begin{aligned}
p(y, \theta) &= p(\theta_G) p(b_1, \dots, b_\ell|\theta_G)\bigg\{ \prod_{i=1}^n p(y_i|b_i, \theta_G) \bigg\} \\
&\quad  \times \bigg\{  \prod_{i> \ell} p(b_i|b_{i-1}, \dots, b_{i-\ell}, \theta_G) \bigg\}.
\end{aligned}
\end{equation*}
The observations $\{y_i\}$ are conditionally independent given $\{b_i\}$ and $\theta_G$. Conditional on $\theta_G$, the local variables $\{b_i\}$ form a $\ell$th order Markov chain if $\ell \geq 1$, and they are conditionally independent if $\ell=0$. This class of models include important models such as GLMMs and state space models. Next, we define some mathematical notation before discussing CSGVA for this class of models.

\subsection{Notation} \label{sec_notation}
For an $r \times r$ matrix $A$, let $\diag(A)$ denote the diagonal elements of $A$ and $\dg(A)$ be the diagonal matrix obtained by setting non-diagonal elements in $A$ to zero. Let $\text{vec}(A)$ be the vector of length $r^2$ obtained by stacking the columns of $A$ under each other from left to right and $\vech(A)$ be the vector of length $r(r+1)/2$ obtained from $\VEC(A)$ by eliminating all superdiagonal elements of $A$. Let $E_r$ be the $r(r+1)/2 \times r^2$ elimination matrix, $K_r$ be the $r^2 \times r^2$ commutation matrix and $D_r$ be the $r^2 \times r(r+1)/2$ duplication matrix \citep[see][]{Magnus1980}. Then $K_r \VEC(A) = \VEC(A^T)$, $E_r \VEC(A) = \vech(A)$, $E_r^T \vech(A) = \VEC(A)$ if $A$ is lower triangular, and $D_r \vech(A) = \VEC(A)$ if $A$ is symmetric. Let $\bone_r$ be a vector of ones of length $r$. The Kronecker product between any two matrices is denoted by $\otimes$. Scalar functions applied to vector arguments are evaluated element by element. Let $\diff$ denote the differential operator \cite[see e.g.][]{Magnus1999}.

\section{Conditionally structured Gaussian variational approximation} \label{sec_CSGVA}
We propose to approximate the posterior distribution $p(\theta|y)$ of the model defined in Section \ref{sec_model} by a density of the form 
\begin{equation*}
q(\theta) = q(\theta_G) q(\theta_L|\theta_G),
\end{equation*}
where $q(\theta_G) = N(\mu_1, \Omega_1^{-1})$, $q(\theta_L|\theta_G) = N(\mu_2, \Omega_2^{-1})$, and $\Omega_1$ and $\Omega_2$ are the precision (inverse covariance) matrices of $q(\theta_G)$ and $q(\theta_L|\theta_G)$ respectively. Here $\mu_2$ and $\Omega_2$ depend on $\theta_G$, but we do not denote this explicitly for notational conciseness. Let $C_1C_1^T$ and $C_2C_2^T$ be unique Cholesky factorizations of $\Omega_1$ and $\Omega_2$ respectively, where $C_1$ and $C_2$ are lower triangular matrices with positive diagonal entries. We further define $C_1^*$ and $C_2^*$ to be lower triangular matrices of order $G$ and $nL$ respectively such that $C_{r,ii}^* = \log (C_{r,ii})$ and $C_{r,ij}^* = C_{r,ij}$ if $i \neq j$ for $r=1,2$. The purpose of introducing $C_1^*$ and $C_2^*$  is to allow unconstrained optimization of the variational parameters in the stochastic gradient ascent algorithm, since diagonal entries of $C_1$ and $C_2$ are constrained to be positive. Note that $C_2$ and $C_2^*$ also depend on $\theta_G$ but again we do not show this explicitly in our notation.

As $\Omega_2$ depends on $\theta_G$, the joint distribution $q(\theta)$ is generally non-Gaussian even though $q(\theta_G)$ and $q(\theta_L|\theta_G)$ are individually Gaussian. Here we consider a first order approximation and assume that $\mu_2$ and $\vech(C_2^*)$ are linear functions of $\theta_G$:
\begin{equation} \label{linear}
\mu_2 = d + C_2^{-T} D(\mu_1 - \theta_G), \quad  
\vech(C_2^*) = f + F \theta_G.
\end{equation}
In \eqref{linear}, $d$ is a vector of length $nL$, $D$ is a $nL \times G$ matrix, $f$ is a vector of length $nL(nL+1)/2$ and $F$ is a $nL(nL+1)/2 \times G$ matrix. For this specification, $q(\theta)$ is not jointly Gaussian due to dependence of the covariance matrix of $q(\theta_L|\theta_G)$ on $\theta_G$. It is Gaussian if and only if $F \equiv 0$. The set of variational parameters to be optimized is denoted as
\begin{equation*}
\lambda = (\mu_1^T, \vech(C_1^*)^T, d^T, \VEC(D)^T, f^T, \VEC(F)^T)^T.
\end{equation*}

As motivation for the linear approximation in \eqref{linear}, consider the linear mixed model,
\begin{equation*}
\begin{gathered}
y_i = X_i \beta + Z_i b_i + \epsilon_i, \quad (i=1, \dots, n) , \\
\beta \sim N(0, \sigma_\beta^2I_p), \quad b_i \sim N(0, \Lambda), \quad \epsilon_i \sim N(0, \sigma_\epsilon^2 I_{n_i}), \\
\end{gathered}
\end{equation*}
where $y_i$ is a vector of responses of length $n_i$ for the $i$th subject, $X_i$ and $Z_i$ are covariate matrices of dimensions $n_i \times p$ and $n_i \times L$ respectively, $\beta$ is a vector of coefficients of length $p$ and $\{b_i\}$ are random effects. Assume $\sigma_\epsilon^2$ is known. Then the global parameters $\theta_G$ consists of $\beta$ and $\Lambda$. The posterior of $\theta_L$ conditional on $\theta_G$ is 
\begin{equation*}
\begin{aligned}
p(\theta_L|y, \theta_G) &\propto \prod_{i=1}^n p(y_i|\beta, b_i) p(b_i|\Lambda) \\
&\propto   \prod_{i=1}^n \exp [-\{b_i^T ( Z_i^TZ_i /\sigma_\epsilon^2 + \Lambda^{-1}) b_i \\
& \quad - 2 b_i^T  Z_i^T(y_i - X_i \beta) /\sigma_\epsilon^2 \}/2].
\end{aligned}
\end{equation*}
Thus $p(\theta_L|y, \theta_G) = \prod_{i=1}^n p(b_i|y, \theta_G)$, where $p(b_i|y, \theta_G)$ is a normal density with precision matrix $Z_i^TZ_i /\sigma_\epsilon^2+ \Lambda^{-1}$ and mean $( Z_i^TZ_i / \sigma_\epsilon^2 + \Lambda^{-1})^{-1} Z_i^T(y_i - X_i \beta)/\sigma_\epsilon^2$. The precision matrix depends on $\Lambda^{-1}$ linearly and the mean depends on $\beta$ linearly after scaling by the covariance matrix. The linear approximation in \eqref{linear} tries to mimic this dependence relationship.

The proposed variational density is conditionally structured and highly flexible. Such dependence structure is particularly valuable in constructing variational approximations for hierarchical models, where there are global scale parameters in $\theta_G$ which help to determine the scale of local latent variables in the conditional posterior of $\theta_L|\theta_G$. While marginal posteriors of the global variables are often well approximated by Gaussian densities, marginal posteriors of the local variables tend to exhibit more skewness and kurtosis. This deviation from normality can be captured by $q(\theta_L) = \int q(\theta_G) q(\theta_L|\theta_G) d\theta_G$, which is a mixture of normal densities. The formulation in \eqref{linear} also allows for a reduction in the number of variational parameters if conditional independence structure consistent with that in the true posterior is imposed on the variational approximation. 

\subsection{Using conditional independence structure}

\citet{Tan2018} incorporate the conditional independence structure of the true posterior into Gaussian variational approximations by using the fact that zeros in the precision matrix correspond to conditional independence for Gaussian random vectors. This incorporation achieves sparsity in the precision matrix of the approximation and leads to a large reduction in the number of variational parameters to be optimized. For high-dimensional $\theta$, this sparse structure is especially important because a full Gaussian approximation involves learning a covariance matrix where the number of elements grows quadratically with the dimension of $\theta$.  

Recall that $\theta_L=(b_1,\dots, b_n)^T$. Suppose $b_i$ is conditionally independent of $b_j$ in the posterior for $|i-j|>\ell$, given $\theta_G$ and $\{b_j \mid |i-j|\leq \ell\}$. For instance, in a GLMM, $\{b_i\}$ may be subject specific random effects, and these are conditionally independent given the global parameters, so this structure holds with $\ell=0$.  In the case of a state space model for a time series, $\{b_i\}$ are the latent states and this structure holds with $\ell=1$. Note that ordering of the latent variables is important here.  

Now partition the precision matrix $\Omega_2$ of $q(\theta_L|\theta_G)$ into $L \times L$ blocks with $n$ row and $n$ column partitions corresponding to $\theta_L=(b_1,\dots, b_n)^T$. Let $\Omega_{2, ij}$ be the block corresponding to $b_i$ horizontally and $b_j$ vertically for $i,j =1, \dots, n$. If $b_i$ is conditionally independent of $b_j$ for $|i-j|>\ell$, given $\theta_G$ and $\{b_j \mid |i-j|\leq \ell \}$, then we set $\Omega_{2, ij}=0$ for all pairs $(i,j)$ with $|i - j| > \ell$. Let $\mathcal{I}$ denote the indices of elements in $\vech(\Omega_2)$ which are fixed at zero by this conditional independence requirement. If we choose $f_i=0$ and $F_{ij}=0$ for all $i\in \mathcal{I}$ and all $j$ in \eqref{linear}, then $C_2^*$ has the same block sparse structure we desire for the lower triangular part of $\Omega_2$. By Proposition 1 of \cite{Rothman2010}, this means that $\Omega_2$ will have the desired block sparse structure. Hence we impose the constraints $f_i=0$ and $F_{ij}=0$ for $i\in \mathcal{I}$ and all $j$, which reduces the number of variational parameters to be optimized.

\section{Optimization of variational parameters} \label{sec_optim}

To make the dependence on $\lambda$ explicit, write $q(\theta)$ as $q_\lambda(\theta)$. The variational parameters $\lambda$ are optimized by minimizing the Kullback-Leibler divergence between $q_\lambda(\theta)$ and the true posterior $p(\theta|y)$, where 
\begin{multline*}
\KL\{q_\lambda||p(\theta|y)\} = \int q_\lambda(\theta) \log \frac{q_\lambda(\theta)}{p(\theta|y)} d\theta \\
= \log p(y) - \int q_\lambda(\theta) \log \frac{p(y,\theta)}{q_\lambda(\theta)} d\theta \geq 0.
\end{multline*}
Minimizing $\KL\{q_\lambda||p(\theta|y)\}$ is therefore equivalent to maximizing an evidence lower bound $\mL^\VI$ on the log marginal likelihood $\log p(y)$, where
\begin{equation} \label{LB}
\mL^\VI = E_{q_\lambda} \{ \log p(y, \theta) - \log q_\lambda(\theta) \}.
\end{equation}
In \eqref{LB}, $E_{q_\lambda}$ denotes expectation with respect to $q_\lambda(\theta)$.
We seek to maximize $\mL^\VI$ with respect to $\lambda$ using stochastic gradient ascent. Starting with some initial estimate of $\lambda$,  we perform the following update at each iteration $t$,
\begin{equation*}
\lambda_t = \lambda_{t-1} + \rho_t \widehat{\nabla}_\lambda \mL^\VI,
\end{equation*}
where $\rho_t$ represents a small stepsize taken in the direction of the stochastic gradient $\widehat{\nabla}_\lambda \mL^\VI$. The sequence $\{\rho_t\}$ should satisfy the conditions $\sum_t \rho_t = \infty$ and $\sum_t \rho_t^2 < \infty$ \citep{Spall2003}. 

An unbiased estimate of the gradient $\nabla_\lambda \mL^\VI$ can be constructed using \eqref{LB} by simulating $\theta$ from $q_\lambda(\theta)$. However, this approach usually results in large fluctuations in the stochastic gradients. Hence we implement the ``reparametrization trick" \citep{Kingma2014, Rezende2014, Titsias2014}, which helps to reduce the variance of the stochastic gradients. This approach writes $\theta\sim q_\lambda(\theta)$ as a function of the variational parameters $\lambda$ and a random vector $s$ having a density not depending on $\lambda$. To explain further, let $s = [s_1^T, s_2^T]^T$, where $s_1$ and $s_2$ are vectors of length $G$ and $nL$ corresponding to $\theta_G$ and $\theta_L$ respectively. Consider a transformation $\theta = r_\lambda (s)$ of the form 
\begin{equation} \label{reptrick}
\begin{bmatrix} \theta_G \\ \theta_L \end{bmatrix} 
 = \begin{bmatrix}
\mu_1 + C_1^{-T} s_1 \\ 
\mu_2 + C_2^{-T} s_2
\end{bmatrix}.
\end{equation}
Since $\mu_2$ and $C_2$ are functions of $\theta_G$ from \eqref{linear}, 
\begin{equation*}
\begin{aligned}
\mu_2 &= d + C_2^{-T} D(\mu_1 - \theta_G) = d- C_2^{-T} D C_1^{-T} s_1, \\
\vech(C_2^*) &= f + F \theta_G = f + F(\mu_1 + C_1^{-T} s_1).
\end{aligned}
\end{equation*}
Hence $\mu_2$ and $C_2$ are functions of $s_1$, and $\theta_L$ is a function of both $s_1$ and $s_2$. This transformation is invertible since given $\theta$ and $\lambda$, we can first recover $s_1 = C_1^T(\theta_G - \mu_1)$, find $\mu_2$ and $C_2$, and then recover $s_2 = C_2^T(\theta_L - \mu_2)$. Applying this transformation, 
\begin{equation}\label{LB2}
\begin{aligned}
\mL^\VI &= \int \phi(s) \{ \log p(y, \theta) - \log q_\lambda(\theta) \} d s \\
&= E_\phi \{ \log p(y, \theta) - \log q_\lambda(\theta) \},
\end{aligned}
\end{equation}
where $E_\phi$ denotes expectation with respect to $\phi(s)$ and $\theta = r_\lambda(s)$.

\subsection{Stochastic gradients} \label{Sec: stoc_grad}
Next, we differentiate \eqref{LB2} with respect to $\lambda$ to find unbiased estimates of the gradients. As $\log q_\lambda (\theta)$ depends on $\lambda$ directly as well as through $\theta$, applying the chain rule, we have
\begin{align}
\nabla_\lambda\mL^\VI &= E_\phi [ \nabla_\lambda r_\lambda(s) \{\nabla_\theta \log p(y, \theta) -  \nabla_\theta \log q_\lambda(\theta)\} \nonumber \\ 
& \quad -  \nabla_\lambda \log q_\lambda(\theta)  ] \label{TD} \\
&= E_\phi[ \nabla_\lambda r_\lambda(s) \{\nabla_\theta \log p(y, \theta) -  \nabla_\theta \log q_\lambda(\theta)\}  ]. \label{PD}
\end{align}
Note that $E_\phi\{  \nabla_\lambda \log q_\lambda(\theta) \}$ = 0 as it is the expectation of the score function. \cite{Roeder2017} refer to the expressions inside the expectations in \eqref{TD} and \eqref{PD} as the {\em total derivative} and {\em path derivative} respectively. In \eqref{PD}, the contributions to the gradient from $\log p(y, \theta)$ and $\log q_\lambda (\theta)$ cancel each other if $q_\lambda(\theta)$ approximates the true posterior well (at convergence). However, the score function $\nabla_\lambda \log q_\lambda(\theta)$ is not necessarily small even if $q_\lambda(\theta)$ is a good approximation to $p(\theta|y)$. This term affects adversely the ability of the algorithm to converge and ``stick" to the optimal variational parameters, a phenomenon \cite{Roeder2017} refers to as ``sticking the landing". Hence we consider the path derivative,
\begin{equation} \label{grad_est}
\widehat{\nabla}_\lambda \mL^\VI= \nabla_\lambda r_\lambda(s) \{\nabla_\theta \log p(y, \theta) -  \nabla_\theta \log q_\lambda(\theta)\}
\end{equation}
as an unbiased estimate of the true gradient $\nabla_\lambda\mL^\VI$. \cite{Tan2018} and \cite{Tan2018b} also demonstrate that the path derivative has smaller variation about zero when the algorithm is close to convergence. 

Let $\nabla_\theta \log p(y, \theta) -  \nabla_\theta \log q_\lambda(\theta) = (\G_1^T, \G_2^T)^T$, where $\G_1$ and $\G_2$ are vectors of length $G$ and $nL$ respectively corresponding to the partitioning of $\theta=[\theta_G^T, \theta_L^T]^T$. Then $\widehat{\nabla}_\lambda \mL^\VI = \nabla_\lambda r_\lambda (s) (\G_1^T, \G_2^T)^T$ is given by
\begin{equation*}
\begin{bmatrix}
\G_1 +  \nabla_{\mu_1} \theta_L \G_2  \\ 
- D_1^* \vech[ C_1^{-T} s_1 \{\G_1 +  (\nabla_{\mu_1} \theta_L  - D^T C_2^{-1}) \G_2\} ^T C_1^{-T} ]\\ 
\G_2 \\ 
-\VEC( C_2^{-1}\G_2 s_1^T C_1^{-1} )\\ 
\nabla_{f} \theta_L  \G_2\\ 
\VEC(\nabla_{f} \theta_L \G_2\theta_G^T)   \\ 
\end{bmatrix},
\end{equation*}
where 
\begin{equation*}
\begin{aligned}
\nabla_{\mu_1} \theta_L  &= F^T \nabla_{f} \theta_L ,\\
\nabla_{f} \theta_L \G_2 &= - D_2^*\vech \{C_2^{-T} (s_2 - DC_1^{-T} s_1) \G_2^T C_2^{-T} \}.
\end{aligned}
\end{equation*}
Here $D^*_1$ and $D^*_2$ are diagonal matrices of order $G(G+1)/2$ and $nL(nL+1)/2$ respectively such that $\diff \vech(C_r) = D^*_r \diff\vech(C_r^*)$ for $r=1,2$. Formally, $D^*_1 = \diag\{ \vech(\dg(C_1) + \bone_G\bone_G^T - I_G) \}$ and $D^*_2 = \diag\{ \vech(\dg(C_2) + \bone_{nL} \bone_{nL}^T - I_{nL}) \}$. The full expression and derivation of $\nabla_\lambda r_\lambda (s) $ are given in Appendix A. In addition, we show (in Appendix A) that
\begin{equation*}
\begin{aligned}
&\nabla_\theta \log q_\lambda (\theta) 
= \begin{bmatrix}
\nabla_{\theta_G} \log q_\lambda (\theta)  \\ 
\nabla_{\theta_L} \log q_\lambda (\theta)  
\end{bmatrix} \\
&= \begin{bmatrix}
 F^T [\vech(I_{nL})- D_2^* \vech\{(\theta_L - d)s_2^T\}]- C_1 s_1- D^T s_2  \\
-C_2 s_2
\end{bmatrix}.
\end{aligned}
\end{equation*}
$\nabla_\theta \log p(y, \theta)$ is model specific and we discuss the application to GLMMs and state space models in Sections \ref{sec_GLMMs} and \ref{sec_SSMs} respectively.

\subsection{Stochastic variational algorithm}

The stochastic gradient ascent algorithm for CSGVA is outlined in Algorithm 1. For computing the stepsize, we use Adam \citep{Kingma2015}, which uses bias-corrected estimates of the first and second moments of the stochastic gradients to compute adaptive learning rates. 
\begin{Algorithm}[h]
\caption{CSGVA algorithm.}
\centering
\parbox{0.47\textwidth}{
\hrule
\vspace{1mm}
Initialize $\lambda_0=0$, $m_0=0$, $v_0=0$, \\ [1mm]
For $t=1, \dots, N$,
\begin{enumerate}
\item Generate $s \sim N(0, I_{nL+G})$ and compute $\theta = r_{\lambda_{t-1}} (s)$.
\item Compute gradient $g_t = \widehat{\nabla}_\lambda \mL^\VI$.
\item Update biased first moment estimate: \\
$m_t = \tau_1 m_{t-1} + (1-\tau_1) g_t$.
\item Update biased second moment estimate: \\
$v_t = \tau_2 v_{t-1} + (1-\tau_2) g_t^2$.
\item Compute bias-corrected first moment estimate: \\
$\hat{m}_t = m_t/(1-\tau_1^t)$.
\item Compute bias-corrected second moment estimate: \\
$\hat{v}_t = v_t/(1-\tau_2^t)$.
\item Update $\lambda_t = \lambda_{t-1} + \alpha\hat{m}_t/(\sqrt{\hat{v}_t} + \epsilon)$.
\end{enumerate}
\hrule}
\end{Algorithm} 

At iteration $t$, the variational parameter $\lambda$ is updated as $\lambda_t = \lambda_{t-1}+ \Delta_t$. Let $g_t$ denote the stochastic gradient estimate at iteration $t$. In steps 3 and 4, Adam computes estimates of the mean (first moment) and uncentered variance (second moment) of the gradients using exponential moving averages, where $\tau_1,\tau_2 \in [0,1)$ control the decay rates. In step 4, $g_t^2$ is evaluated as $g_t \odot g_t$, where $\odot$ denotes the Hadamard (element-wise) product. As $m_t$ and $v_t$ are initialized as zero, these moment estimates tend to be biased towards zero, especially at the beginning of the algorithm if $\tau_1$, $\tau_2$ are close to one. As $m_t = (1-\tau_1) \sum_{i=1}^t \tau_1^{t-i} g_i$, 
\[
 E(m_t) = E(g_t) (1-\tau_1^t) + \zeta,
\]
where $\zeta=0$ if $E(g_i) = E(g_t)$ for $1 \leq i < t$. Otherwise, $\zeta$ can be kept small since the weights for past gradients decrease exponentially. An analogous argument holds for $v_t$. Thus the bias can be corrected by using the estimates $\hat{m}_t$ and $\hat{v}_t$ in steps 5 and 6. The change is then computed as
\begin{equation*}
\Delta_t = \frac{\alpha\hat{m}_t}{\sqrt{\hat{v}_t} + \epsilon},
\end{equation*}
where $\alpha$ controls the magnitude of the stepsize and $\epsilon$ is a small positive constant which ensures that the denominator is positive. In our experiments, we set $\alpha = 0.001$, $\tau_1 = 0.9$, $\tau_2=0.99$ and $\epsilon = 10^{-8}$, values close to what is recommended by \cite{Kingma2015}.

At each iteration $t$, we can also compute an unbiased estimate of the lower bound,  
\[
\hat{\mL }^\VI= \log p(y, \theta) - \log q_{\lambda_{t-1}}(\theta),
\]
where $\theta$ is computed in step 1. Since these estimates are stochastic, we follow the path traced by $\bar{\mL}^\VI $, which is an average of the lower bounds averaged over every 1000 iterations, as a means to diagnose the convergence of Algorithm 1. $\bar{\mL^\VI }$ tends to increase monotonically at the start, but as the algorithm comes close to convergence, the values of $\bar{\mL}^\VI $ fluctuate close to and about the true maximum lower bound. Hence, we fit a least squares regression line to the past $\kappa$ values of $\bar{\mL}^\VI $ and terminate Algorithm 1 once the gradient of the regression line becomes negative \citep[see][]{Tan2018b}. For our experiments, we set $\kappa = 6$.

\section{Links to Gaussian variational approximation} \label{sec_assoc}
CSGVA is an extension of Gaussian variational approximation \citep[GVA,][]{Tan2018}. In both approaches, the conditional posterior independence structure of the local latent variables is used to introduce sparsity in the precision matrix of the approximation. Below we demonstrate that GVA is a special case of CSGVA when $F \equiv 0$. 

\cite{Tan2018} consider a GVA of the form
\begin{equation*}
\begin{bmatrix} \theta_L \\ \theta_G \end{bmatrix} \sim
\N\left( \begin{bmatrix} \mu_L \\ \mu_G \end{bmatrix}, T^{-T} T^{-1}
\right) \,\text{where} \; 
T = \begin{bmatrix} T_{LL} & 0 \\
T_{GL} & T_{GG} \end{bmatrix}.
\end{equation*}
Note that $T_{LL}$ and $T_{GG}$ are lower triangular matrices. Using a vector $s = [s_{L}^T, s_G^T]^T \sim N(0, I_{nL+G})$, we can write 
\begin{equation*}
\begin{aligned}
&\begin{bmatrix} \theta_L \\ \theta_G \end{bmatrix} 
= \begin{bmatrix} \mu_L \\ \mu_G \end{bmatrix} + T^{-T} 
\begin{bmatrix} s_{L} \\s_G \end{bmatrix}, \\
&\text{where} \quad
T^{-T} = \begin{bmatrix} T_{LL}^{-T}  & - T_{LL}^{-T} T_{GL}^T T_{GG}^{-T} \\[2mm]
0 & T_{GG}^{-T} \end{bmatrix}.
\end{aligned}
\end{equation*}
Assuming $F\equiv 0$ for CSGVA, we have from \eqref{reptrick} that 
\begin{equation*}
\begin{aligned}
\begin{bmatrix} \theta_L \\ \theta_G \end{bmatrix} 
= \begin{bmatrix} d \\ \mu_1 \end{bmatrix} + 
\begin{bmatrix}
C_2^{-T} & - C_2^{-T} D C_1^{-T} \\
0 & C_1^{-T} & 
\end{bmatrix}
\begin{bmatrix} s_2 \\s_1 \end{bmatrix},
\end{aligned}
\end{equation*}
where $[s_2^T, s_1^T]^T \sim N(0, I_{nL+G})$. Hence we can identify 
\begin{equation*}
\mu_1 = \mu_G, \; d = \mu_L, \; C_1 = T_{GG}, \; C_2 = T_{LL}, \; D = T_{GL}^T.
\end{equation*}
If the standard way of initializing of Algorithm 1 (by setting $\lambda=0$) does not work well, we can use this association to initialize Algorithm 1 by using the fit from GVA. This informative initialization can reduce computation time significantly although there may be a risk of getting stuck in a local mode.

\section{Importance weighted variational inference} \label{sec_impt}
Here we discuss how CSGVA can be improved by maximizing an importance weighted lower bound \citep[IWLB,][]{Burda2016}, which leads to a tighter lower bound on the log marginal likelihood, and a variational approximation less prone to underestimation of the true posterior variance. We also relate the IWLB with R\'{e}nyi's divergence \citep{Renyi1961, vanErven2014} between $q_\lambda(\theta)$ and $p(\theta|y)$, demonstrating that maximizing the IWLB instead of the usual evidence lower bound leads to a transition in the behavior of the variational approximation from ``mode-seeking" to ``mass-covering". We first define R\'{e}nyi's divergence and the variational R\'{e}nyi bound \citep{Li2016}, before introducing the IWLB as the expectation of a Monte Carlo approximation of the variational R\'{e}nyi  bound. 

\subsection{R\'{e}nyi's divergence and variational R\'{e}nyi bound}
R\'{e}nyi's divergence provides a measure of the distance between two densities $q$ and $p$, and it is defined as 
\begin{equation*}
D_\alpha(q||p) = \frac{1}{\alpha-1} \log \int q(\theta)^\alpha p(\theta)^{1-\alpha} d\theta,
\end{equation*}
for $0 <\alpha <\infty$, $\alpha \neq 1$. This definition can be extended by continuity to the orders 0, 1 and $\infty$, as well as to negative orders $\alpha < 0$. Note that $D_\alpha(q||p)$ is no longer a divergence measure if $\alpha < 0$, but we can write $D_\alpha(q||p)$ as $\frac{\alpha}{1-\alpha} D_{1-\alpha}(p||q)$ for $\alpha \notin \{0,1\}$ by the skew symmetry property. As $\alpha$ approaches 1, the limit of $D_\alpha(q||p)$ is the Kullback-Leibler divergence, $\KL(q||p)$. In variational inference, minimizing the Kullback-Leibler divergence between the variational density $q_\lambda(\theta)$ and the true posterior $p(\theta|y)$ is equivalent to maximizing a lower bound $\mL^{\VI}$ on the log marginal likelihood due to the relationship:
\begin{equation*}
\begin{aligned}
\mL^\VI &= \log p(y) - \KL\{q_\lambda||p(\theta|y)\} = E_{q_\lambda}\left\{ \log \frac{ p(y, \theta)}{q_\lambda(\theta)} \right\} .
\end{aligned}
\end{equation*}
Generalizing this relation using R\'{e}nyi's divergence measure, \cite{Li2016} define the variational R\'{e}nyi bound $\mL_\alpha$ as 
\begin{equation*}
\begin{aligned}
\mL_\alpha &= \log p(y) - D_\alpha\{q_\lambda||p(\theta|y)\} \\
&= \frac{1}{1-\alpha} \log E_{q_\lambda}\left\{ \left(\frac{ p(y, \theta)  }{q_\lambda(\theta)} \right)^{1-\alpha} \right\} .
\end{aligned}
\end{equation*}
Note that $\mL_1$, the limit of $\mL_\alpha$ as $\alpha \rightarrow 1$, is equal to $\mL^\VI$. A Monte Carlo approximation of $\mL_\alpha$ when the expectation is intractable is  
\begin{equation} \label{Lhatalpha}
\hat{\mL}_{\alpha, K} = \frac{1}{1-\alpha} \log \frac{1}{K} \sum_{k=1}^K w_k^{1-\alpha}, 
\end{equation}
where $\Theta_K = [\theta_1, \dots, \theta_K]$ is a set of $K$ samples generated randomly from $q_\lambda(\theta)$, and 
\[
w_k = w(\theta_k) = \frac{p(y, \theta_k)}{q_\lambda(\theta_k)}, \quad k=1, \dots, K,
\]
are importance weights. For each $k$, $E_{q_\lambda} (w_k) = p(y)$. The limit of $\hat{\mL}_{\alpha, K}$ as $\alpha \rightarrow 1$ is $\frac{1}{K} \sum_{k=1}^K \log \frac{p(y, \theta_k)}{q_\lambda(\theta_k)}$. Hence $\hat{\mL}_{1, K}$ is an unbiased estimate of $\mL_1$ as $E_{\Theta_K}(\hat{\mL}_{1, K}) = \mL_1 = \mL^\VI$, 
where $E_{\Theta_K}$ denotes expectation with respect to $q(\Theta_K) = \prod_{k=1}^{K} q_\lambda (\theta_k)$. For $\alpha \neq 1$, $\hat{\mL}_{\alpha, K} $ is not an unbiased estimate of $\mL_\alpha$.

\subsection{Importance weighted lower bound}
The importance weighted lower bound \citep[IWLB,][]{Burda2016} is defined as
\begin{equation*}
\mL_K^\IW = E_{\Theta_K}(\hat{\mL}_{0, K}) 
= E_{\Theta_K}\left( \log  \frac{1}{K} \sum_{k=1}^K w_k  \right), 
\end{equation*}
where $\alpha = 0$ in \eqref{Lhatalpha}. It reduces to $\mL^{\VI}$ when $K=1$. By Jensen's inequality,
\begin{equation*}
\begin{aligned}
\mL_K^\IW \leq  \log E_{\Theta_K} \left( \frac{1}{K} \sum_{k=1}^K w_k \right) = \log p(y).
\end{aligned}
\end{equation*}
Thus $\mL_K^\IW$ provides a lower bound to the log marginal likelihood for any positive integer $K$. From Theorem \ref{thm1}  \citep{Burda2016}, this bound becomes tighter as $K$ increases.
\begin{theorem} \label{thm1}
$\mL^\IW_K$ increases with $K$ and approaches $\log p(y)$ as $K \rightarrow \infty$ if $w_k$ is bounded. 
\end{theorem}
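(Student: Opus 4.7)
The plan is to prove the two claims separately. Monotonicity $\mL^\IW_{K+1} \geq \mL^\IW_K$ will come from a combinatorial symmetrization combined with Jensen's inequality, and the $K \to \infty$ limit will come from the strong law of large numbers together with a bounded-convergence argument that uses the boundedness of $w_k$.

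For monotonicity, I would rewrite the $(K+1)$-sample arithmetic mean as an average of its $K$-element sub-averages. For each subset $I \subset \{1, \dots, K+1\}$ of size $K$, set $a_I = K^{-1} \sum_{k \in I} w_k$; since each index appears in exactly $K$ of the $K+1$ such subsets,
\begin{equation*}
\frac{1}{K+1} \sum_{k=1}^{K+1} w_k \;=\; \frac{1}{K+1} \sum_I a_I.
\end{equation*}
Conditional on $\Theta_{K+1}$, Jensen's inequality applied to the concave logarithm bounds $\log$ of the left-hand side below by the average of $\log a_I$ over $I$. Taking expectation under $\prod_{k=1}^{K+1} q_\lambda(\theta_k)$ and using the exchangeability of $\theta_1, \dots, \theta_{K+1}$ (so that every $E[\log a_I]$ equals $\mL^\IW_K$) collapses the right-hand side, yielding $\mL^\IW_{K+1} \geq \mL^\IW_K$.

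For the limit, let $\hat p_K = K^{-1} \sum_{k=1}^K w_k$. Since the $w_k$ are i.i.d.\ with $E_{q_\lambda}[w_k] = p(y)$, the strong law of large numbers gives $\hat p_K \to p(y)$ almost surely, and continuity of $\log$ at $p(y)>0$ gives $\log \hat p_K \to \log p(y)$ a.s. The boundedness hypothesis bounds $\hat p_K$ uniformly in $K$, so the bounded convergence theorem transfers this a.s.\ limit to expectation: $\mL^\IW_K = E_{\Theta_K}[\log \hat p_K] \to \log p(y)$. The Jensen upper bound $\mL^\IW_K \leq \log p(y)$ already stated in the excerpt is consistent with this, and the monotonicity just proved guarantees that the entire non-decreasing sequence shares the limit.

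The main obstacle is justifying the interchange of limit and expectation in the last step. An upper bound $w_k \leq M$ yields $\log \hat p_K \leq \log M$ immediately, but $\log$ is unbounded below near zero, so a one-sided bound alone is not sufficient. The cleanest reading of ``$w_k$ bounded''---and the one implicit in Burda et al.---is a two-sided bound $0 < m \leq w_k \leq M$, under which $\log \hat p_K \in [\log m, \log M]$ and bounded convergence closes the argument in a single line. Under only an upper bound one would instead combine reverse Fatou (which uses the dominating constant $\log M$) for the $\limsup$ with a truncation argument on the event $\{\hat p_K < \varepsilon\}$, whose probability is controlled by the SLLN, to secure the matching $\liminf$; this truncation is the only genuinely non-routine step.
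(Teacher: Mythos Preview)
Your proof is correct and essentially identical to the paper's: both establish monotonicity by expressing the $(K{+}1)$-sample average as a uniform average over its $K$-element sub-averages (the paper phrases this as an expectation over a random size-$K$ subset, which is the same thing) and then applying Jensen, and both invoke the strong law of large numbers for the limit. If anything you are more careful than the paper, which simply asserts $\mL^\IW_K \to \log p(y)$ from the almost-sure convergence without discussing the interchange of limit and expectation that your bounded-convergence argument supplies.
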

\begin{proof}
Let $I = \{w_{I_1}, \dots, w_{I_K}\}$ be selected randomly without replacement from $\{w_1, \dots, w_{K+1}\}$. Then $E_{I|\Theta_{K+1}}(w_{I_j}) = \frac{1}{K+1} \sum_{k=1}^{K+1} w_k$ for any $j=1, \dots, K$, where $E_{I|\Theta_{K+1}}$ denotes the expectation associated with the randomness in selecting $I$ given $\Theta_{K+1}$. Thus 
\begin{equation*}
\begin{aligned}
\mL^\IW_{K+1} &= E_{\Theta_{K+1}} \left( \log \frac{1}{K+1} \sum\nolimits_{k=1}^{K+1} w_k \right) \\
&= E_{\Theta_{K+1}} \left\{ \log E_{I|\Theta_{K+1}} \left( \frac{1}{K} \sum\nolimits_{j=1}^K w_{I_j} \right) \right\}  \\
&\geq E_{\Theta_{K+1}} \left\{ E_{I|\Theta_{K+1}} \log\left( \frac{1}{K} \sum\nolimits_{j=1}^K w_{I_j}\right) \right\} \\
&= E_{\Theta_{K}} \log\left( \frac{1}{K} \sum\nolimits_{k=1}^K w_k \right) = \mL^\IW_K.
\end{aligned}
\end{equation*}
If $w_k$ is bounded, then $\frac{1}{K} \sum_{k=1}^K w_k \overset{a.s.}{\rightarrow} p(y)$ as $K \rightarrow \infty$ by the law of large numbers. Hence $\mL^\IW_K = E_{\Theta_K} ( \log \frac{1}{K} \sum_{k=1}^K w_k ) \rightarrow \log p(y)$ as $K \rightarrow \infty$.
\end{proof}

Next, we present some properties of R\'{e}nyi's divergence and $E_{\Theta_K}(\hat{\mL}_{\alpha, K})$ which are important in understanding the behavior of the variational density arising from maximizing $\mL^\IW_K$. The proofs of these properties can be found in \cite{vanErven2014} and \cite{Li2016}. 

\begin{property}
$D_\alpha$ is increasing in $\alpha$, and is continuous in $\alpha$ on $[0,1] \cup \{\alpha\notin[0,1] \mid |D_\alpha| < \infty \}$. 
\end{property}

\begin{property}
$E_{\Theta_K}(\hat{\mL}_{\alpha, K})$ is continuous and decreasing in $\alpha$ for fixed $K$. 
\end{property}

\begin{theorem} \label{thm2}
There exists $0 \leq \alpha_{q_\lambda,K} \leq 1$ for given $q_\lambda$ and $K$ such that 
\[
\log p(y) = D_{\alpha_{q_\lambda,K}}\{q_\lambda||p(\theta|y)\} + \mL_K^\IW.
\]
\end{theorem}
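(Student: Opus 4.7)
The approach is to rearrange the claim as $\mL^\IW_K = \mL_{\alpha_{q_\lambda, K}}$ for the variational R\'{e}nyi bound $\mL_\alpha = \log p(y) - D_\alpha\{q_\lambda\|p(\theta|y)\}$ introduced earlier, and then invoke the intermediate value theorem applied to the map $\alpha \mapsto \mL_\alpha$ on $[0,1]$. The task therefore reduces to (i) establishing continuity of $\mL_\alpha$ on $[0,1]$, (ii) computing the two endpoint values, and (iii) sandwiching $\mL^\IW_K$ between them.

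For step (i), I would appeal to Property 1: $D_\alpha\{q_\lambda\|p(\theta|y)\}$ is continuous in $\alpha$ throughout $[0,1]$ whenever the divergence is finite there. In our setting $q_\lambda$ has full support on $\mathbb{R}^{G+nL}$ (it is built out of Gaussian conditionals with positive definite precision matrices), so finiteness is automatic and $\mL_\alpha$ is continuous and nonincreasing on $[0,1]$.

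For step (ii), I would use the extension-by-continuity definitions already adopted in Section~\ref{sec_impt}. As $\alpha \to 1$, $D_1 = \KL\{q_\lambda\|p(\theta|y)\}$, giving $\mL_1 = \mL^\VI$. As $\alpha \to 0$, $D_0\{q_\lambda\|p(\theta|y)\} = -\log \Pr_{p(\theta|y)}\{q_\lambda(\theta) > 0\} = 0$ because $q_\lambda$ has full support, giving $\mL_0 = \log p(y)$.

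For step (iii), Theorem~\ref{thm1} already delivers both required bounds at once: Jensen's inequality there gives $\mL^\IW_K \le \log p(y) = \mL_0$, while the monotonicity in $K$ combined with $\mL^\IW_1 = \mL^\VI = \mL_1$ gives $\mL^\IW_K \ge \mL_1$. Applying the intermediate value theorem to the continuous map $\alpha \mapsto \mL_\alpha$ on $[0,1]$ then supplies an $\alpha_{q_\lambda, K} \in [0,1]$ with $\mL_{\alpha_{q_\lambda, K}} = \mL^\IW_K$, which rearranges to the desired identity. The main delicacy is the behavior of $D_\alpha$ at $\alpha = 0$, where the defining integral formula is indeterminate and one must use the continuous extension together with the full-support property of $q_\lambda$ to justify $D_0 = 0$; fortunately this support condition is automatic for the Gaussian-based families used throughout this paper, so the argument goes through cleanly.
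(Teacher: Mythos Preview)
Your argument is correct and reaches the same conclusion via the intermediate value theorem applied to $\alpha \mapsto \mL_\alpha$ (equivalently $\alpha \mapsto D_\alpha$), but you obtain the sandwich $\mL_1 \le \mL^\IW_K \le \mL_0$ by a different route than the paper. The paper invokes Property~2 directly: since $E_{\Theta_K}(\hat{\mL}_{\alpha,K})$ is decreasing in $\alpha$, setting $\alpha=1$ and $\alpha=0$ gives $\mL_1 = E_{\Theta_K}(\hat{\mL}_{1,K}) \le E_{\Theta_K}(\hat{\mL}_{0,K}) = \mL^\IW_K \le \mL_0$ (the last inequality being Jensen applied to $\hat{\mL}_{0,K}$). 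You instead use Theorem~\ref{thm1} for the lower bound ($\mL^\IW_K \ge \mL^\IW_1 = \mL_1$) and the explicit identification $\mL_0 = \log p(y)$ via $D_0 = 0$ for the upper bound. Your route is arguably more self-contained, since Theorem~\ref{thm1} is proved in the paper while Property~2 is merely cited; on the other hand, the paper's route does not need to compute $\mL_0$ explicitly and so does not rely on the full-support property of $q_\lambda$ to get $D_0 = 0$. In the present Gaussian setting that assumption is automatic, as you note, so the difference is cosmetic.
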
 
\begin{proof}
From Property 2,
\begin{equation*}
\begin{gathered}
\mL_1 =E_{\Theta_K}(\hat{\mL}_{1, K})  \leq E_{\Theta_K}(\hat{\mL}_{0, K}) = \mL^\IW_K \leq  \mL_0, \\
\mL_1- \log p(y) \leq \mL^\IW_K - \log p(y) \leq \mL_0 -  \log p(y), \\
D_0\{ q_\lambda ||p(\theta|y)\} \leq \log p(y) - \mL^\IW_K  \leq D_1\{ q_\lambda ||p(\theta|y)\} .
\end{gathered}
\end{equation*}
From Property 1, since $D_\alpha$ is continuous and decreasing for $\alpha \in [0,1]$, there exists $0 \leq \alpha_{q_\lambda,K} \leq 1$ such that $\log p(y) - \mL^\IW_K = D_{\alpha_{q_\lambda,K}}\{q_\lambda||p(\theta|y)\}$.
\end{proof}
Minimizing R\'{e}nyi's divergence for $\alpha \gg 1$ tends to produce approximations which are mode-seeking (zero-forcing) while maximizing R\'{e}nyi's divergence for $\alpha \ll 0$ encourages mass-covering behavior. Theorem \ref{thm2} suggests that maximizing the IWLB results in a variational approximation $q_\lambda$ whose R\'{e}nyi's divergence from the true posterior can be captured with $0 \leq \alpha \leq 1$, which represents a mix and certain balance between mode-seeking and mass-covering behavior \citep{Minka2005}. In our experiments, we observe that maximizing the IWLB is highly effective in correcting the underestimation of posterior variance in variational inference. 

Alternatively, if we approximate $\mL_K^\IW$ by considering a second-order Taylor expansion of $\log \bar{w}_K$ about $E_\Theta(\bar{w}_K) = p(y)$, where $\bar{w}_K =\frac{1}{K} \sum_{k=1}^K w_K$, and then take expectations, we have 
\[
\mL_K^\IW \approx \log p(y) - \frac{\var(w_k)}{2K p(y)^2}.
\]
\cite{Maddison2017} and \cite{Domke2018} provide bounds on the order of the remainder term in the Taylor approximation above, and demonstrate that the ``looseness" of the IWLB is given by  $\var(w_k) $ as $K \rightarrow \infty$. Minimizing $\var(w_K)$ is equivalent to minimizing the $\chi^2$ divergence $D_2(p||q)$. Note that if $q_\lambda(\theta)$ has thin tails compared to $p(\theta|y)$, then the variance of $\var(w_k)$ will be large. Hence minimizing $\var(w_K)$ attempts to match $p(\theta|y)$ with $q_\lambda(\theta)$ so that $q_\lambda(\theta)$ is able to cover the tails.

\subsection{Unbiased gradient estimate of importance weighted lower bound}
To maximize the IWLB in CSGVA, we need to find an unbiased estimate of $\nabla_\lambda \mL_K^\IW $ using the transformation in \eqref{reptrick}. Let $s_k \sim N(0, I_{G+nL})$, $\theta_k = r _\lambda(s_k)$ for $k=1, \dots, K$, and $S_K = [s_1, \dots, s_K]^T$.
\begin{equation} \label{IW1}
\begin{aligned}
\nabla_\lambda \mL_K^\IW
&= \nabla_\lambda E_{\Theta_K} (\log \bar{w}_K )= \nabla_\lambda E_{S_K} ( \log \bar{w}_K )\\
&= E_{S_K} \left[\sum_{k=1}^K \frac{ \nabla_\lambda w_k }{ \sum_{k'=1}^K w_{k'}} \right] \\
&= E_{S_K} \left[\sum_{k=1}^K \frac{ w_k \nabla_\lambda \log w_k }{ \sum_{k'=1}^K w_{k'} } \right] \\
&= E_{S_K} \left[\sum\nolimits_{k=1}^K \widetilde{w}_k \nabla_\lambda \log w_k \right],
\end{aligned}
\end{equation}
where $w_k = w(\theta_k) = w\{r _\lambda(s_k)\}$ and $\widetilde{w}_k = w_k/ \sum_{k'=1}^K w_{k'}$ for $k=1, \dots, K$ are normalized importance weights. Applying chain rule,
\begin{equation} \label{IW2}
\begin{aligned}
\nabla_\lambda \log w_k &= \nabla_\lambda r_\lambda(s_k) \nabla_{\theta_k} \log w_k - \nabla_\lambda \log q_\lambda(\theta_k).
\end{aligned}
\end{equation}
In Section \ref{Sec: stoc_grad}, we note that $E_\phi\{ \nabla_\lambda \log q_\lambda(\theta) \} = 0$ as it is the expectation of the score function and hence we can omit $\nabla_\lambda \log q_\lambda(\theta)$ to obtain an unbiased estimate of $\nabla_\lambda \mL^\VI$. However, in this case, it is unclear if 
\begin{equation} \label{grad_lambda_q}
E_{S_K} \left[\sum_{k=1}^K \widetilde{w}_k \nabla_\lambda \log q_\lambda(\theta_k) \right] = 0.
\end{equation}
\cite{Roeder2017} conjecture that \eqref{grad_lambda_q} is true and report improved results when omitting the term $\nabla_\lambda \log q_\lambda(\theta_k)$ from $\nabla_\lambda \log w_k$ in computing gradient estimates. However, \cite{Tucker2018} demonstrated via simulations that \eqref{grad_lambda_q} does not hold generally and that such omission will result in biased gradient estimates. Our own simulations using CSGVA also suggest that \eqref{grad_lambda_q} does not hold even though omission of $\nabla_\lambda \log q_\lambda(\theta_k)$ does lead to improved results. As the stochastic gradient algorithm is not guaranteed to converge with biased gradient estimates, we turn to the {\em double reparametrized gradient estimate} proposed by \cite{Tucker2018} which allows unbiased gradient estimates to be constructed with the omission of $\nabla_\lambda \log q_\lambda(\theta_k)$ albeit with revised weights. 

Since $\tilde{w}_k$ depends on $\lambda$ directly as well as through $\theta_k$, we use chain rule to obtain
\begin{equation} \label{LHS}
\begin{aligned}
\nabla_\lambda E_{\theta_k}(\tilde{w}_k) &= \nabla_\lambda E_{s_k}(\tilde{w}_k) \\
&= E_{s_k}(\nabla_\lambda \theta_k \nabla_{\theta_k} \tilde{w}_k) + E_{s_k}(\nabla_\lambda \tilde{w}_k),
\end{aligned}
\end{equation}
where 
\begin{equation*}
\begin{aligned}
\nabla_{\theta_k} \tilde{w}_k &= \left\{  \frac{1}{\sum_{k'=1}^K w_{k'}} - \frac{w_k }{(\sum_{k'=1}^K w_{k'})^2} \right\} \nabla_{\theta_k} w_k \\
&= ( \tilde{w}_k- \tilde{w}_k^2 ) \nabla_{\theta_k} \log w_k.
\end{aligned}
\end{equation*}
Alternatively,
\begin{equation} \label{RHS}
\begin{aligned}
&\nabla_\lambda E_{\theta_k}(\tilde{w}_k)
=\nabla_\lambda \int q_\lambda(\theta_k) \tilde{w}_k  \;d\theta_k \\
&= \int \tilde{w}_k \nabla_\lambda q_\lambda(\theta_k) + q_\lambda(\theta_k) \nabla_\lambda \tilde{w}_k \; d\theta_k \\
&= \int \tilde{w}_k q_\lambda(\theta_k)  \nabla_\lambda \log q_\lambda(\theta_k) \;d\theta_k  + E_{\theta_k}(\nabla_\lambda \tilde{w}_k) \\
&= E_{\theta_k} [\tilde{w}_k \nabla_\lambda \log q_\lambda(\theta_k) ] + E_{s_k} (\nabla_\lambda \tilde{w}_k).
\end{aligned}
\end{equation}
Comparing \eqref{LHS} and \eqref{RHS}, we have 
\begin{equation*}
\begin{aligned}
&E_{\Theta_K} \left( \sum\nolimits_{k=1}^K \tilde{w}_k \nabla_\lambda \log q_\lambda(\theta_k) \right) \\ 
&= \sum\nolimits_{k=1}^K E_{\Theta_K\backslash \theta_k} E_{\theta_k}  [\tilde{w}_k \nabla_\lambda \log q_\lambda(\theta_k)]  \\
&= \sum\nolimits_{k=1}^K E_{S_K\backslash s_k} E_{s_k}(\nabla_\lambda \theta_k ( \tilde{w}_k- \tilde{w}_k^2 ) \nabla_{\theta_k} \log w_k)  \\
&= E_{S_K} \left\{ \sum\nolimits_{k=1}^K ( \tilde{w}_k  - \tilde{w}_k^2 ) \nabla_\lambda r_\lambda(s_k) \nabla_{\theta_k} \log w_k \right\}.
\end{aligned}
\end{equation*}
Combining the above expression with \eqref{IW1} and \eqref{IW2}, we find that 
\begin{equation*}
\nabla_\lambda \mL_K^\IW = E_{S_K} \left( \sum\nolimits_{k=1}^K \tilde{w}_k^2 \nabla_\lambda r_\lambda(s_k) \nabla_{\theta_k} \log w_k \right)
\end{equation*}
An unbiased gradient estimate is thus given by 
\begin{equation*}
\widehat{\nabla}_\lambda \mL_K^\IW = \sum_{k=1}^K \tilde{w}_k^2 \nabla_\lambda r_\lambda(s_k) \nabla_{\theta_k} \{\log p(y, \theta_k) - \log q_\lambda (\theta_k) \}.
\end{equation*}

Thus, to use CSGVA with important weights, we only need to modify Algorithm 1 by drawing $K$ samples $s_1, \dots, s_K$ in step 1 instead of a single sample and then compute the unbiased gradient estimate, $g_t = \widehat{\nabla}_\lambda \mL_K^\IW $, in step 2. The rest of the steps in Algorithm 1 remain unchanged. In the importance weighted version of CSGVA, the gradient estimate based on a single sample $s$ is replaced by a weighted sum of the gradients in \eqref{grad_est} based on $K$ samples $s_1, \dots, s_K$. However, these weights do not necessarily sum to 1. An unbiased estimate of $\mL_K^\IW$ itself is given by $\hat{\mL}_K^\IW = \log \frac{1}{K} \sum_{k=1}^K w_k$.

\section{Experimental results} \label{sec_expt}
We apply CSGVA to GLMMs and state space models and compare the results with that of GVA in terms of computation time and accuracy of the posterior approximation. Lower bounds reported exclude constants which are independent of the model variables. We also illustrate how CSGVA can be improved using importance weighting (IW-CSGVA), considering $K \in \{5, 20, 100\}$. We find that IW-CSGVA performs poorly if it is initialized in the standard manner using $\lambda = 0$. This is because, when $q_\lambda(\theta)$ is still far from optimal, a few of the importance weights tend to dominate with the rest close to zero, thus producing inaccurate estimates of the gradients. Hence we initialize IW-CSGVA using the CSGVA fit, and the algorithm is terminated after a short run of 1000 iterations as IW-CSGVA is computationally intensive and improvements in the IWLB and variational approximation seem to be negligible thereafter. Posterior distributions estimated using MCMC via RStan are regarded as the ground truth. Code for all variational algorithms are written in Julia and are available as supplementary materials. All experiments are run on on Intel Core i9-9900K CPU @3.60 GHz with 16GB RAM. As the computation time of IW-CSGVA increases linearly with $K$, we also investigate the speedup that can be achieved through parallel computing on a machine with 8 cores. Julia retains one worker (or core) as the master process and parallel computing is implemented using the remaining seven workers. 

The parametrization of a hierarchical model plays a major role in the rate of convergence of both GVA and CSGVA. In some cases, it can even affect the ability of the algorithm to converge (to a local mode). We have attempted both the centered and noncentered parametrizations \citep{Papaspiliopoulos2003, Papaspiliopoulos2007}, which are known to have a huge impact on the rate of convergence of MCMC algorithms. These two parametrizations are complementary and neither is superior to the other. If an algorithm converges slowly under one parametrization, it often converges much faster under the other. Which parametrization works better usually depends on the nature of data. For the datasets that we use in the experiments, the centered parametrization was found to have better convergence properties than the noncentered parametrization for GLMMs while the noncentered parametrization is preferred for stochastic volatilty models. These observations are further discussed below.

 \section{Generalized linear mixed models} \label{sec_GLMMs}
Let $y_i = (y_{i1}, \dots, y_{in_i})^T$ denote the vector of responses of length $n_i$ for the $i$th subject for $ i=1,\dots,n$, where $y_{ij}$ is generated from some distribution in the exponential family. The mean $\mu_{ij} = E(y_{ij})$ is connected to the linear predictor $\eta_{ij}$ via
\begin{equation*}
g(\mu_{ij}) = \eta_{ij} = X_{ij}^T\beta + Z_{ij}^Tb_i,
\end{equation*}
for some smooth invertible link function $g(\cdot)$. The fixed effects $\beta$ is a $p \times 1$ vector and $b_i \sim N(0,\Lambda)$ is a $L \times 1$ vector of random effects specific to the $i$th subject. $X_{ij}$ and $Z_{ij}$ are  vectors of covariates of length $p$ and $L$ respectively. Let $\eta_i = [\eta_{i1}, \dots, \eta_{in_i}]^T$, $X_i = [X_{i1}, \dots, X_{in_i}]^T$ and $Z_i = [Z_{i1}, \dots, Z_{in_i}]^T$. We focus on the one-parameter exponential family with canonical links. This includes the Bernoulli model, $y_{ij} \sim$ Bern($\mu_{ij}$), with the logit link $g(\mu_{ij}) = \log\{\mu_{ij}/(1-\mu_{ij})\}$ and the Poisson model, $y_{ij} \sim $ Pois($\mu_{ij}$), with the log link $g(\mu_{ij}) = \log(\mu_{ij})$. Let $WW^T$ be the unique Cholesky decomposition of $\Lambda^{-1}$, where $W$ is a lower triangular matrix with positive diagonal entries. Define $W^*$ such that $W^*_{ii} = \log(W_{ii})$ and $W^*_{ij} = W_{ij}$ if $i \neq j$, and let $\omega = \vech(W^*)$.  We consider normal priors, $\beta \sim N(0, \sigma_{\beta}^2I)$ and $\omega \sim N(0, \sigma_\omega^2 I)$, where $\sigma_\beta^2$ and $\sigma_\omega^2$ are set as 100. 

The above parametrization of the GLMM is {\em noncentered} since $b_i$ has mean 0. Alternatively, we can consider the {\em centered} parametrization proposed by \cite{Tan2013}. Suppose the covariates for the random effects are a subset of the covariates for the fixed effects and the first column of $X_i$ and $Z_i$ are ones corresponding to an intercept and random intercept respectively. Then we can write 
\[
\eta_i = X_i \beta + Z_i b_i = Z_i \beta_R + X_i^G \beta_G + Z_i b_i.
\]
where $\beta = [\beta_R^T, \beta_G^T]^T$. We further split $X_i^G$ into covariates which are subject specific (varies only with $i$ and assumes the same value for $j=1, \dots, n_i$) and those which are not, and $\beta_G = [ \beta_{G_1}^T, \beta_{G_2}^T]^T$ accordingly, where $\beta_{G_1}$, $\beta_{G_2}$ are vectors of length $g_1$ and $g_2$ respectively. Then 
\begin{equation*}
\begin{aligned}
\eta_i &= Z_i \beta_R + \bone_{n_i} {x_i^{G_1}}^T \beta_{G_1} + X_i^{G_2} \beta_{G_2} + Z_i b_i \\
& = Z_i (C_i \beta_{RG_1} + b_i) + X_i^{G_2} \beta_{G_2},
\end{aligned}
\end{equation*}
where
\begin{equation*}
C_i =\begin{bmatrix}  I_L & \begin{array}{l} {x_i^{G_1}}^T \\ 0_{L-1 \times g_1} \end{array}  \end{bmatrix}\;\; \text{and}\;\;\beta_{RG_1}= \begin{bmatrix} \beta_R \\ \beta_{G_1} \end{bmatrix}.
\end{equation*}
Let $\tilde{b}_i = C_i \beta_{RG_1}+ b_i$. The centered parametrization is represented as
\begin{equation} \label{centered_GLMM}
\eta_i = Z_i \tilde{b}_i  + X_i^{G_2} \beta_{G_2}, \quad \tilde{b}_i \sim N(C_i \beta_{RG_1}, \Lambda)
\end{equation}
for $i=1, \dots, n$.

\cite{Tan2013} compare the centered, noncentered and partially noncentered parametrizations for GLMMs in the context of variational Bayes, showing that the choice of parametrization affects not only the rate of convergence, but also the accuracy of the variational approximation. For CSGVA, we observe that the accuracy of the variational approximation and the rate of convergence can also be greatly affected. \cite{Tan2013} demonstrate that the centered parametrization is preferred when the observations are highly informative about the latent variables. In practice, a general guideline is to use the centered parametrization for Poisson models when observed counts are large and the noncentered parametrization when most counts are close to zero. For Bernoulli models, differences between using centered and noncentered parametrizations are usually minor. Here we use the centered parametrization in \eqref{centered_GLMM}, which has been observed to yield gains in convergence rates for the datasets used for illustration. 

The global parameters are $\theta_G = (\beta^T, \omega^T)^T$ of dimension $G = p + L(L+1)/2$, and the local variables are $\theta_L = (\tilde{b}_1, \dots, \tilde{b}_n)^T$. The joint density is
\begin{equation*}
p(y,\theta) = p(\beta) p(\omega) \prod^n_{i=1} \bigg\{ p(\tilde{b}_i|\omega, \beta)\prod^{n_i}_{j=1}p(y_{ij}|\beta,\tilde{b}_i) \bigg\}.
\end{equation*}
The log of the joint density is given by
\begin{equation*}
\begin{aligned}
\log p(y, \theta) &= \sum\nolimits_{i=1}^n \{ y_i \eta_i - \bone^T h(\eta_i) \\
& - {(\tilde{b}_i - C_i \beta_{RG_1})^T WW^T (\tilde{b}_i - C_i \beta_{RG_1})}/{2} \}   \\
& - \beta^T \beta / (2\sigma_\beta^2) - \omega^T \omega/ (2\sigma_\omega^2) + n \log|W| + c,
\end{aligned}
\end{equation*}
where $h(\cdot)$ is the log-partition function and $c$ is a constant independent of $\theta$. For the Poisson model with log link, $h(x) = \exp(x)$. For the Bernoulli model with logit link, $h(x) = \log\{ 1+\exp(x) \}$. The gradient $\nabla_\theta \log p(y, \theta) $ is given in Appendix B.

For the GLMM, $b_i$ and $b_j$ are conditionally independent  given $\theta_G$ for $ i \neq j$ in $p(\theta|y)$. Hence we impose the following sparsity structure on $\Omega_2$ and $C_2$,
\begin{equation*}
\begin{aligned}
\Omega_2 = \begin{bmatrix}
\Omega_{2,11} & 0 & \dots & 0	\\
0 & \Omega_{2,22} & \dots & 0	\\
\vdots & \vdots & \ddots & \vdots \\
0 & 0 & \dots & \Omega_{2, nn}
\end{bmatrix}, \\
C_2^* = \left[\begin{array}{cccc}
C^*_{2,11} & 0 & \dots & 0	\\
0 & C^*_{2, 22} & \dots & 0	\\
\vdots & \vdots & \ddots & \vdots \\
0 & 0 & \dots & C^*_{2, nn}
\end{array}\right],
\end{aligned}
\end{equation*}
where each $\Omega_{2,ii}$ is a $L \times L$ block matrix and each $C^*_{2,ii}$ is a $L \times L$ lower triangular matrix. We set $f_i = 0$ and $F_{ij} =0$ for all $i \in \mathcal{I}$ and all $j$, where $\mathcal{I}$ denotes the set of indices in $\vech(C_2^*)$ which are fixed as zero. The number of nonzero elements in $\vech(C_2^*)$ is $nL(L+1)/2$. Hence the number of variational parameters to be optimized are reduced from $nL(nL+1)/2$ to $nL(L+1)/2$ for $f$ and from $nL(nL+1)G/2$ to $nL(L+1)G/2$ for $F$.

\subsection{Epilepsy data}
In this epilepsy data \citep{Thall1990}, $n = 59$ patients are involved in a clinical trial to investigate the effects of the anti-epileptic drug Progabide. The patients are randomly assigned to receive either the drug (Trt = 1) or a placebo (Trt = 0). The response $y_i$ denotes the number of epileptic attacks patient $i$ had during 4 follow-up periods of two weeks each. Covariates include the log of the age of the patients (Age), the log of 1/4 the baseline seizure count recorded over an eight-week period prior to the trial (Base) and Visit (coded as Visit$_1 = -0.3$, Visit$_2 = -0.1$, Visit$_3  = 0.1$ and Visit$_4 = 0.3$). Note that Age is centered about its mean. Consider $y_{ij} \sim \text{Pois}(\mu_{ij})$, where
\begin{equation*}
\begin{aligned}
\log \mu_{ij} &= \beta_0 + \beta_{\text{Base}}\text{Base}_i +\beta_{\text{Trt}}\text{Trt}_i +\beta_{\text{Age}}\text{Age}_i\\
& \quad  + \beta_{\text{Base}\times \text{Trt}}\text{Base}_i \times \text{Trt}_i + \beta_{\text{Visit}}\text{Visit}_{ij} \\
& \quad + b_{i1} + b_{i2}\text{Visit}_{ij},
\end{aligned}
\end{equation*}
for $i = 1,\dots, 59, j = 1,\dots, 4$ \citep{Breslow1993}.

Table \ref{tab_epil} shows the results obtained from applying the variational algorithms to this data. The lower bounds are estimated using 1000 simulations in each case and the mean and standard deviation from these simulations are reported. CSGVA produced an improvement in the estimate of the lower bound (3139.2) as compared to GVA (3138.3) and maximizing the IWLB led to further improvements. Using a larger $K$ of 20 or 100 resulted in minimal improvements compared with $K=5$. As this dataset is small, parallel computing is slower than serial for a small $K$. This is because, even though the importance weights and gradients for $K$ samples are computed in parallel, the cost of sending and fetching data from the workers at each iteration dwarfs the cost of computation when $K$ is small. For $K=100$, parallel computing reduces the computation time by about half.
\begin{table}[htb!]
\caption{Epilepsy data. Number of iterations $I$ (in thousands), runtimes (in seconds) and estimates of lower bound (standard deviation in brackets) of the variational methods.}
\label{tab_epil}
\centering 
\begin{tabular}{@{}lccccc@{}}
\hline\noalign{\smallskip}
& $K$ & $I$ & time & parallel & $\hat{\mL}_K^\IW$ \\ 
\noalign{\smallskip}\hline\noalign{\smallskip}
GVA & 1 & 31 & 13.9 & - & 3138.3 (1.8) \\
CSGVA & 1 & 39 & 16.2 & - & 3139.2 (1.5) \\
\multirow{3}{*}{IW-CSGVA} & 5 & 1 & 2.5 & 6.1 & 3139.9 (0.7) \\
 & 20 & 1 & 6.9 & 8.1 & 3140.1 (0.4) \\
 & 100 & 1 & 33.5 & 16.0 & 3140.1 (0.3) \\
\noalign{\smallskip}\hline
\end{tabular}
\end{table}

The estimated marginal posterior distributions of the global parameters are shown in Figure \ref{fig_epil}. The plots show that CSGVA (red) produces improved estimates of the posterior distribution as compared to GVA (blue), especially in estimating the posterior variance of the precision parameters $\omega_2$ and $\omega_3$. The posteriors estimated using IW-CSGVA for the different values of $K$ are very close. By using just $K=5$, we are able to obtain estimates that are virtually indistinguishable from that of MCMC.
\begin{figure}[htb!]
\centering
\includegraphics[width=0.49\textwidth]{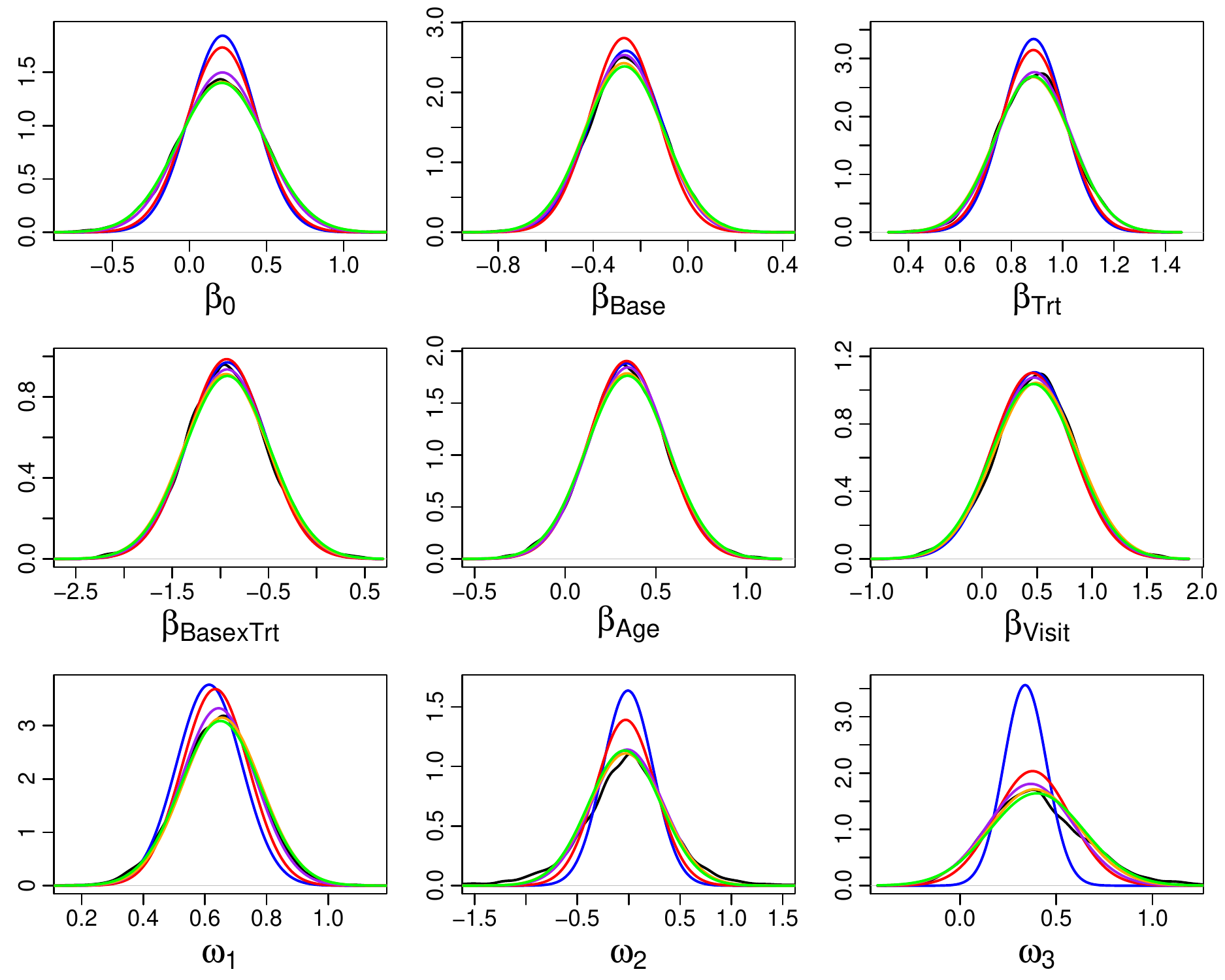}
\caption{Epilepsy data. Marginal posterior distributions of global parameters. Black (MCMC), blue (GVA), red (CSGVA), purple ($K=5$), orange ($K=20$), green ($K=100$).} \label{fig_epil}
\end{figure}

\subsection{Madras data}
These data come from the Madras longitudinal schizophrenia study \citep{Diggle2002} for detecting a psychiatric symptom called ``thought disorder". Monthly records showing whether the symptom is present in a patient are kept for $n = 86$ patients over 12 months. The response $y_{ij}$ is a binary indicator for presence of the symptom. Covariates include the time in months since initial hospitalization ($t$), gender of patient (Gender = 0 if male and 1 if female) and age of patient (Age = 0 if the patient is younger than 20 years and 1 otherwise). Consider $y_{ij} \sim \text{Bern}(\mu_{ij})$ and
\begin{multline*}
\text{logit}(\mu_{ij}) = \beta_0 +\beta_{\text{Age}}\text{ Age}_i +\beta_{\text{Gender}}\text{ Gender}_i + \beta_tt_{ij}\\
+ \beta_{\text{Age} \times t}\text{ Age}_i \times t_{ij} + \beta_{\text{Gender} \times t}\text{ Gender}_i \times t_{ij} +b_i,
\end{multline*}
for $i =1,\dots,86, 1 \leq j \leq12$.

The results in Table \ref{tab_madras} are quite similar to that of the epilepsy data. CSGVA yields an improvement in the lower bound estimate as compared to GVA and IW-CSGVA provided further improvements, which are evident starting with a $K$ as small as 5. Parallel computing halved the computation time for $K=100$ but did not yield any benefits for $K \in\{5,20\}$. From Figure \ref{fig_madras}, CSGVA and IW-CSGVA are again able to capture the posterior variance of the precision parameter $\omega_1$ better than GVA.

\begin{table}[htb!]
\caption{Madras data. Number of iterations $I$ (in thousands), runtimes (in seconds) and estimates of lower bound (standard deviation in brackets) of the variational methods.}
\label{tab_madras}
\centering 
\begin{tabular}{@{}lccccc@{}}
\hline\noalign{\smallskip}
& $K$ & $I$ & time & parallel & $\hat{\mL}_K^\IW$ \\ 
\noalign{\smallskip}\hline\noalign{\smallskip}
GVA & 1 & 25 & 13.1 & - & -383.4 (1.4) \\
CSGVA & 1 & 35 & 12.6 & - & -383.1 (1.4) \\
\multirow{3}{*}{IW-CSGVA}  & 5 & 1 & 2.4 & 7.1 & -382.5 (0.7) \\
 & 20 & 1 & 6.8 & 8.9 & -382.4 (0.4) \\
 & 100 & 1 & 33.9 & 16.8 & -382.3 (0.2) \\
\noalign{\smallskip}\hline
\end{tabular}
\end{table}

\begin{figure}[htb!]
\centering
\includegraphics[width=0.49\textwidth]{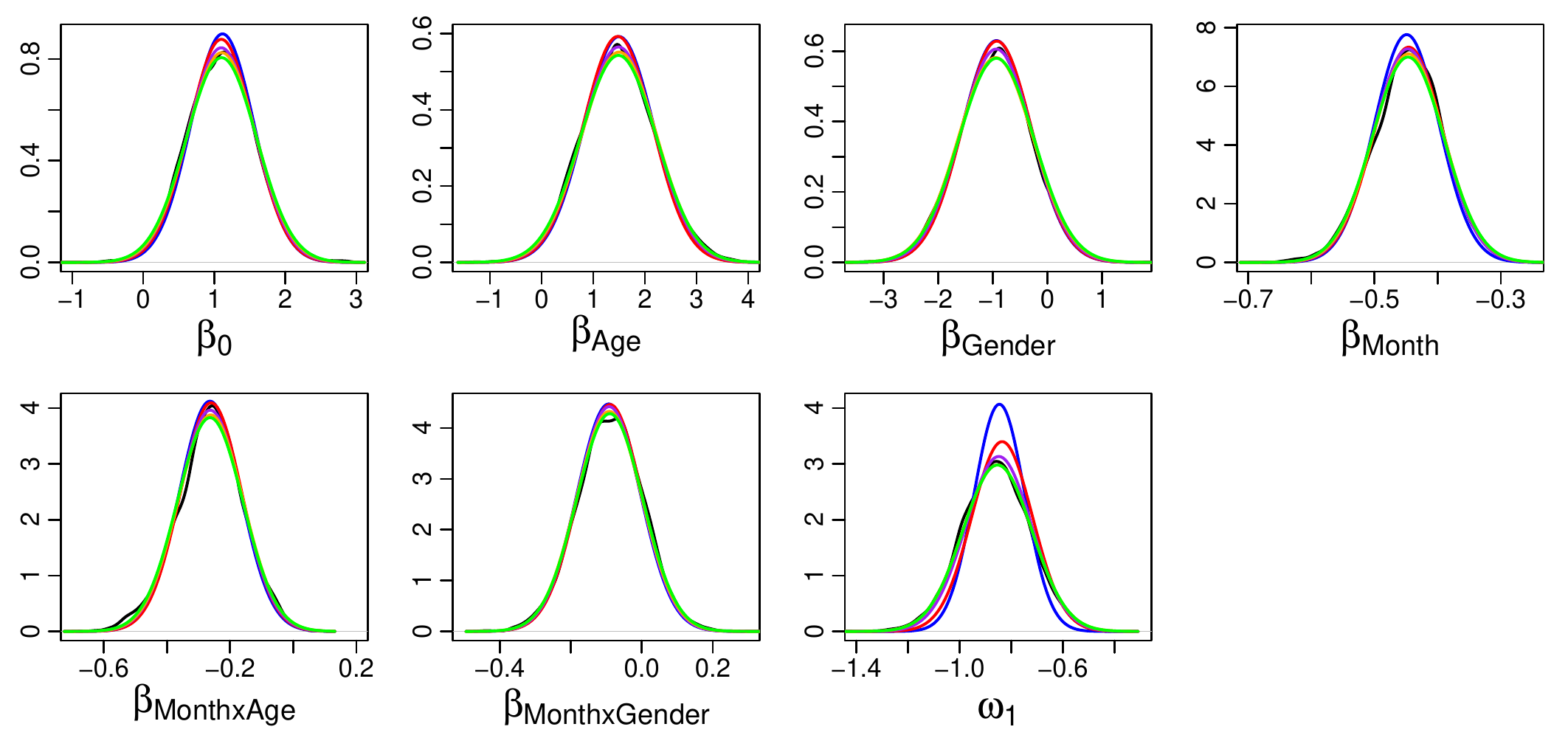}
\caption{Madras data. Marginal posterior distributions of global parameters. Black (MCMC), blue (GVA), red (CSGVA), purple ($K=5$), orange ($K=20$), green ($K=100$).} \label{fig_madras}
\end{figure}

\subsection{Six cities data}
In the six cities data \citep{Fitzmaurice1993}, $n = 537$ children from Steubenville, Ohio, are involved in a longitudinal study to investigate the health effects of air pollution. Each child is examined yearly from age 7 to 10 and the response $y_{ij}$ is a binary indicator for wheezing. There are two covariates, $\text{Smoke}_i$ (a binary indicator for smoking status of the mother of child $i$) and $\text{Age}_{ij}$ (age of child $i$ at time point $j$, centered at 9 years). Consider $y_{ij} \sim \text{Bern}(\mu_{ij})$, where
\begin{equation*}
\begin{aligned}
\text{logit}(\mu_{ij}) &= \beta_0 +\beta_{\text{Smoke}}\text{Smoke}_i +\beta_{\text{Age}} \text{ Age}_{ij} \\
&\quad + \beta_{\text{Smoke} \times \text{Age}} \text{ Smoke}_i \times \text{Age}_{ij} +b_{i},
\end{aligned}
\end{equation*}
for $i = 1,\dots, 537, j = 1,\dots, 4$.

\begin{table}[htb!]
\caption{Six cities data. Number of iterations $I$ (in thousands), runtimes (in seconds) and estimates of lower bound (standard deviation in brackets) of the variational methods.}
\label{tab_wheeze}
\centering 
\begin{tabular}{@{}lccccc@{}}
\hline\noalign{\smallskip}
& $K$ & $I$ & time & parallel & $\hat{\mL}_K$ \\ 
\noalign{\smallskip}\hline\noalign{\smallskip}
GVA & 1 & 26 & 60.3 & - & -816.4 (4.0) \\
CSGVA & 1 & 28 & 36.5 & - & -816.0 (3.9) \\
\multirow{3}{*}{IW-CSGVA}  & 5 & 1 & 6.5 & 16.3 & -812.6 (2.5) \\
& 20 & 1 & 23.1 & 24.5 & -811.0 (1.9) \\
& 100 & 1 & 115.5 & 61.4 & -809.8 (1.5) \\
\noalign{\smallskip}\hline
\end{tabular}
\end{table}

From Table \ref{tab_wheeze}, CSGVA managed to achieve a higher lower bound than GVA in about half the runtime. As $K$ increases, IW-CSGVA produced tighter lower bounds for the log marginal likelihood. As in the previous two examples, parallel computing is beneficial only when $K=100$, cutting the runtime by about half. From Figure \ref{fig_wheeze}, there is slight overestimation of the posterior means of $\beta_0$ and $\omega_1$ by all the variational methods. However, CSGVA and IW-CSGVA are able to capture the posterior variance of these parameters much better than GVA especially for $\omega_1$. 

\begin{figure}[htb!]
\centering
\includegraphics[width=0.49\textwidth]{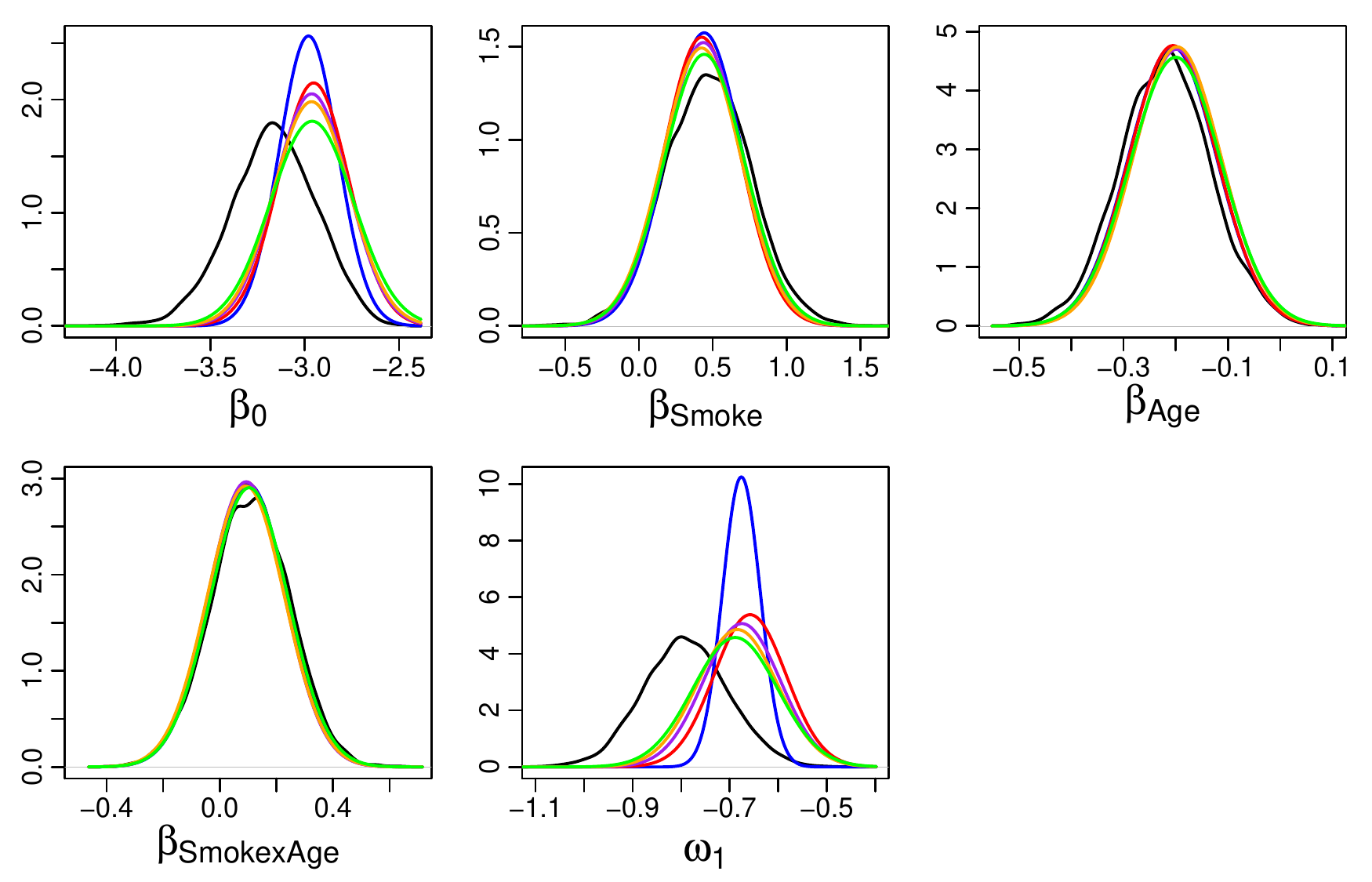}
\caption{Six cities data. Marginal posterior distributions of global parameters. Black (MCMC), blue (GVA), red (CSGVA), purple ($K=5$), orange ($K=20$), green ($K=100$).} \label{fig_wheeze}
\end{figure}

\section{Application to state space models} \label{sec_SSMs}

Here we consider the stochastic volatility model (SVM) widely used for modeling financial time series. Let each observation $y_i$ for $i=1, \dots, n$, be generated from a zero-mean Gaussian distribution where the error variance is stochastically evolving over time. The unobserved log volatility $b_i$ is modeled using an autoregressive process of order one with Gaussian disturbances: 
\begin{equation*}
\begin{gathered}
y_i|b_i, \sigma, \kappa \sim N( 0, \e^{\sigma b_i + \kappa} ), \quad (i=1,\dots,n)\\
b_i|\phi \sim N(\phi b_{i-1}, 1), \quad (i=2,\dots,n) \\
b_1|\phi \sim N(0, 1/(1-\phi^2) ),
\end{gathered}
\end{equation*}
where $\kappa \in \mathbb{R}$, $\sigma > 0$ and  $0< \phi < 1$. Here, $y_i$ represents the mean-corrected return at time $i$ and the states $\{b_i\}$ come from a stationary process with $b_1$ drawn from the stationary distribution. The parametrization of the SVM above is noncentered. The centered parametrization can be obtained by replacing $b_i$ by $(\tilde{b}_i - \kappa)/\sigma$. The performance of GVA and CSGVA are sensitive to the parametrization both in terms of rate of convergence and attained local mode. For the data sets below, the noncentered parametrization was found to have better convergence properties. The sensitivity of the stochastic volatility model to parametrization when fitted using MCMC algorithms is well known in the literature. To ``combine the best of different worlds", \cite{Kastner2014} introduce a strategy that samples parameters from the centered and noncentered parametrizations alternately. \cite{Tan2017} studies optimal partially noncentered parametrizations for Gaussian state space models fitted using EM, MCMC or  variational algorithms.

We use the following transformations to map constrained parameters to  $\mathbb{R}$:
\begin{equation*}
\alpha = \log (\exp(\sigma) - 1), \quad 
\psi = \logit (\phi).
\end{equation*}
This transformation for $\alpha$ works better than $\alpha = \log(\sigma)$, which leads to erratic fluctuations in the lower bound and convergence issues more often. The global variables are $\theta_G = [\alpha, \kappa, \psi]^T$ of dimension $G = 3$ and the local variables are $\theta_L = [b_1, \dots, b_n]^T$ of length $n$. We assume normal priors for the global parameters, where $\alpha \sim N (0, \sigma_{\alpha}^2)$, $\kappa \sim N (0, \sigma_{\kappa}^2)$ and $\psi \sim N (0, \sigma_{\psi}^2)$, where $\sigma_\alpha^2 = \sigma_\kappa^2 = \sigma_\psi^2 = 10$. The joint density can be written as
\begin{equation*}
\begin{aligned}
p(y,\theta) &= p(\alpha)p(\kappa)p(\psi)p(b_1|\psi) \bigg\{ \prod_{i=2}^{n}p(b_i|b_{i-1}, \psi) \bigg\} \\
&\times  \bigg\{ \prod^n_{i=1}p(y_i|b_i, \alpha, \kappa) \bigg\} .
\end{aligned}
\end{equation*}
The log joint density is
\begin{equation*}
\begin{aligned}
\log p(y, \theta) &= - \frac{n\kappa}{2} - \frac{\sigma}{2} \sum_{i=1}^n b_i - \frac{1}{2}\sum^n_{i=1} y_i^2 \e^{-\sigma b_i - \kappa} \\
& \quad - \frac{1}{2}\sum_{i=2}^{n} (b_i - \phi b_{i-1})^2 - \frac{1}{2} b_1^2 (1-\phi^2) \\
& \quad   + \frac{1}{2}\log(1-\phi^2) - \frac{\alpha^2}{2\sigma_{\alpha}^2} - \frac{\kappa^2}{2\sigma_{\kappa}^2} - \frac{\psi^2}{2\sigma_{\psi}^2} + c,
\end{aligned}
\end{equation*}
where $\phi = \exp(\psi)/\{1+\exp(\psi)\}$, $\sigma = \log(\exp(\alpha)+1)$ and $c$ is a constant independent of $\theta$. The gradient $\nabla_\theta \log p(y, \theta)$ is given in Appendix C. For this model, $b_i$ is conditionally independent of $b_j$ in the posterior if $ |i - j| >1$ given $\theta_G$. Thus, the sparsity structure imposed on $\Omega_2$ and $C_2$ are
\begin{equation*}
\begin{aligned}
\Omega = \begin{bmatrix}
\Omega_{2,11} & \Omega_{2,12} &0 &\dots &0	\\
 \Omega_{2,21} & \Omega_{2,22} & \Omega_{2,23} & \dots  & 0	\\
 0 & \Omega_{2,32} & \Omega_{2,33} & \dots & 0	\\
\vdots & \vdots & \vdots & \ddots & \vdots \\
0 & 0 & 0 & \dots & \Omega_{2,nn}
\end{bmatrix}, \\
C_2 = \begin{bmatrix}
C_{2,{11}} & 0 &0 &\dots  &0	\\
C_{2,21} & C_{2,{22}} & 0 & \dots  & 0	\\
 0 & C_{2,{32}} & C_{2,{33}} & \dots  & 0	\\
\vdots & \vdots & \vdots & \ddots  & \vdots \\
0 & 0 & 0 & \dots & C_{2,{nn}}
\end{bmatrix}.
\end{aligned}
\end{equation*}
The number of nonzero elements in $\vech(C_2^*)$ is $2n-1$. Setting $f_i = 0$ and $F_{ij} =0$ for all $i \in \mathcal{I}$ and all $j$, where $\mathcal{I}$ denotes the set of indices in $\vech(C_2^*)$ which are fixed as zero, the number of variational parameters to be optimized are reduced from $n(n+1)/2$ to $2n-1$ for $f$, and from $n(n+1)G/2$ to $(2n-1)G$ for $F$.

\subsection{GBP/USD exchange rate data}
This data contain 946 observations of the exchange rates of the US Dollar (USD) against the British Pound (GBP), recorded daily from 1 October 1981, to 28 June 1985. This data are available from the ``Garch" dataset in the R package {\tt Ecdat}. The mean-corrected responses $\{y_t|t=1, \dots, n\}$ are computed from the exchange rates $\{r_t|t=0, \dots, n\}$ as
\begin{equation*}
y_t = 100 \left\{\log\left(\frac{r_t}{r_{t-1}}\right) - \frac{1}{n}\sum^n_{i=1}\log\left(\frac{r_i}{r_{i-1}}\right)\right\},
\end{equation*}
where $n=945$.

For this data, CSGVA failed to achieve a higher lower bound when it was initialized using $\lambda=0$. Hence we initialize CSGVA using the fit from GVA instead, by using the association discussed in Section \ref{sec_assoc}. With this informative starting point, CSGVA converged in 16000 iterations and managed to improve upon the GVA fit, attaining a higher lower bound. IW-CSGVA led to further improvements with increasing $K$. As this dataset contains a large number of observations with correspondingly more variational parameters to be optimized, the computation is more intensive and parallel computing comes in very useful reducing the computation time by factors of 1.8, 2.9 and 4.2 for $K =5, 20, 100$ respectively. 
\begin{table}[htb!]
\caption{GBP data. Number of iterations $I$ (in thousands), runtimes (in seconds) and estimates of lower bound (standard deviation in brackets) of the variational methods.}
\label{tab_gbp}
\centering 
\begin{tabular}{@{}lccccc@{}}
\hline\noalign{\smallskip}
& $K$ & $I$ & time & parallel & $\hat{\mL}_K^\IW$ \\ 
\noalign{\smallskip}\hline\noalign{\smallskip}
GVA & 1 & 61 & 239.7 & - & -138.2 (1.3) \\
CSGVA & 1 & 16 & 58.6 & - & -137.8 (1.3) \\
\multirow{3}{*}{IW-CSGVA} & 5 & 1 & 18.3 & 10.2  & -137.4 (1.0) \\
& 20 & 1 & 71.2 & 24.4 & -137.0 (0.5) \\
& 100 & 1 & 355.3 & 84.3 & -136.8 (0.4) \\
\noalign{\smallskip}\hline
\end{tabular}
\end{table}

Figure \ref{fig1_gbp} shows the estimated marginal posteriors of the global parameters. CSGVA provides significant improvements in estimating the posterior variance of $\alpha$ and $\psi$ as compared to GVA. With the application of IW-CSGVA, we are able to obtain posterior estimates that are quite close to that of MCMC even with a small $K$.
\begin{figure}[htb!]
\centering
\includegraphics[width=0.49\textwidth]{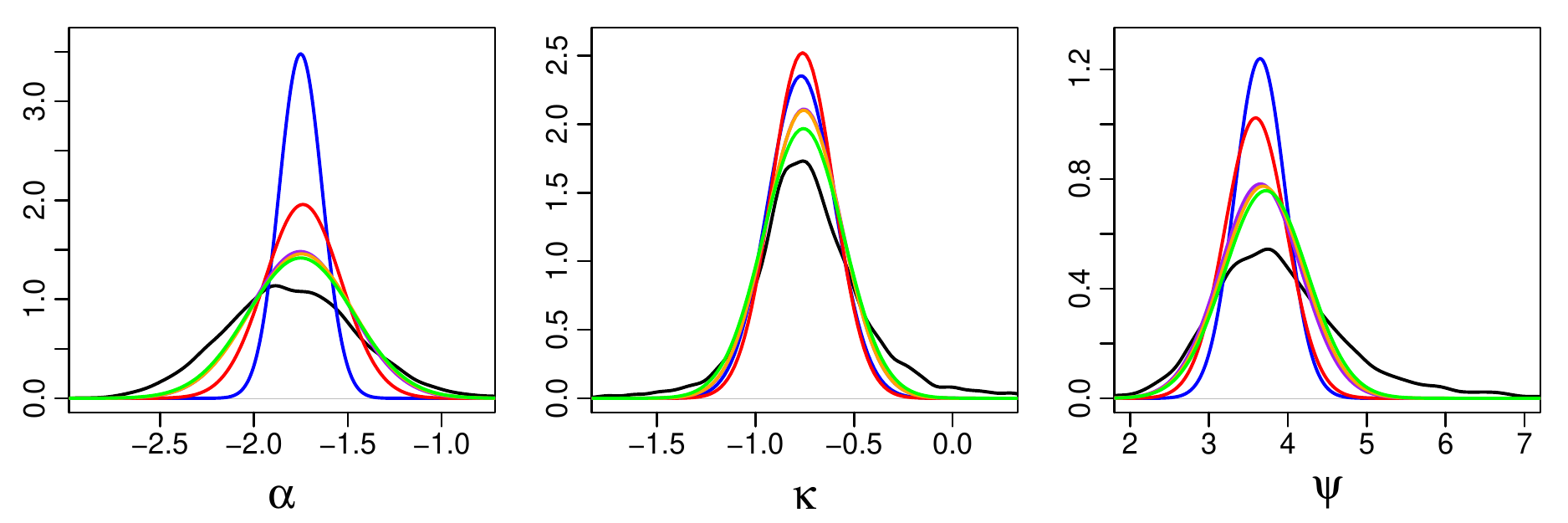}
\caption{GBP data. Marginal posterior distributions of global parameters. Black (MCMC), blue (GVA), red (CSGVA), purple ($K=5$), orange ($K=20$), green ($K=100$).} \label{fig1_gbp}
\end{figure}

Figure \ref{fig2_gbp} shows the estimated marginal posteriors of the latent states $\{b_i\}$ summarized using the mean (solid line) and one standard deviation from the mean (dotted line) estimated by MCMC (black) and IW-CSGVA ($K=5$, purple). 
\begin{figure}[htb!]
\centering
\includegraphics[width=0.49\textwidth]{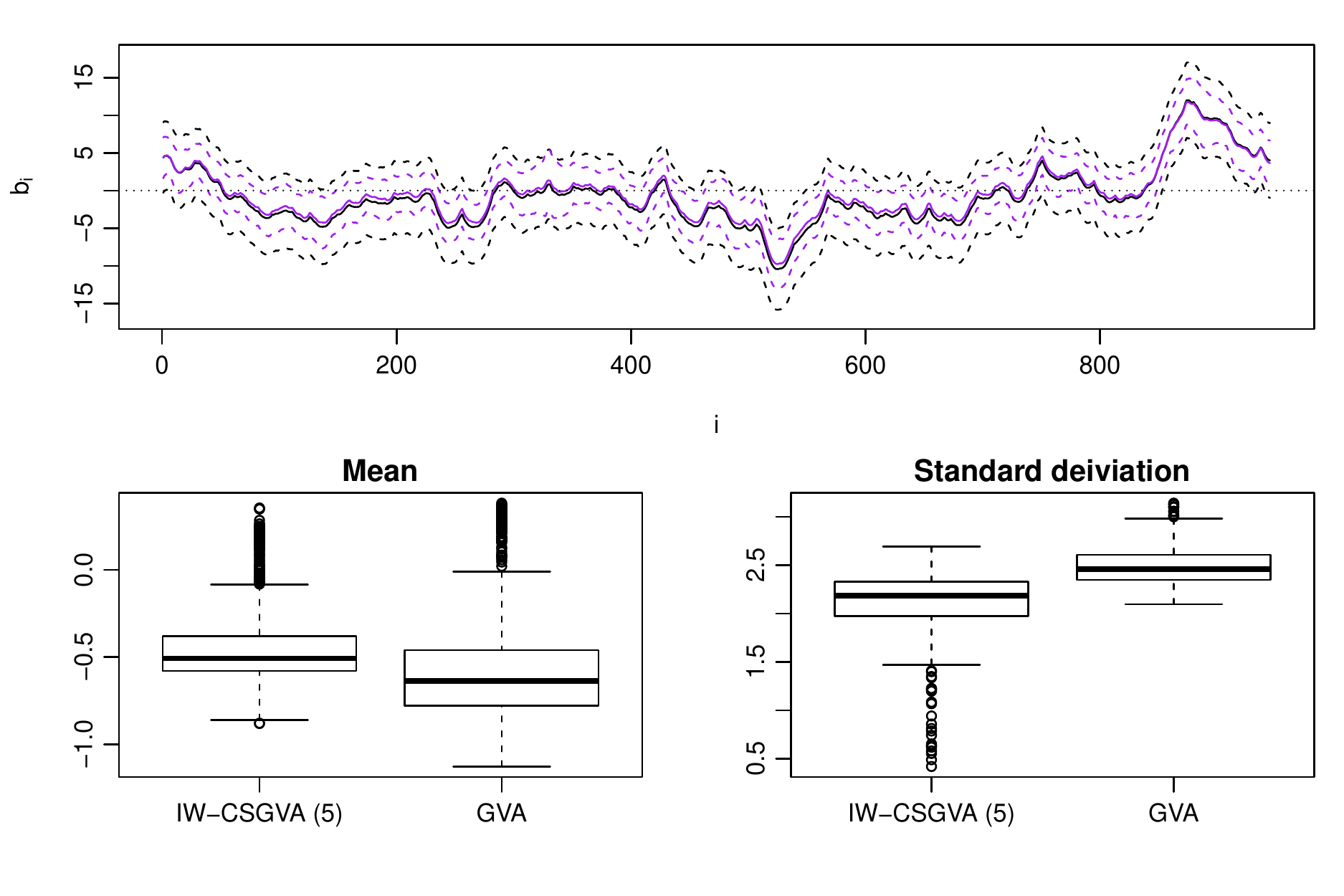}
\caption{GBP data. Top: Posterior means (solid line) of the latent states and one standard deviation from the mean (dotted line) estimated using MCMC (black) and IW-CSGVA ($K=5$, purple). Bottom: Boxplots of $\text{mean}_{\text{MCMC}} - \text{mean}_{\text{VA}}$ and $\text{sd}_{\text{MCMC}} - \text{sd}_{\text{VA}}$.} \label{fig2_gbp}
\end{figure}
For IW-CSGVA, the means and standard deviations are estimated based on 2000 samples, by generating $\theta_G$ from $q(\theta_G)$ followed by $\theta_L$ from $q(\theta_L|\theta_G)$. For MCMC, estimation was based on 5000 samples. IW-CSGVA estimated the means quite accurately (with a little overestimation) but the standard deviations are underestimated when compared to MCMC. The boxplots shows the difference between the means and standard deviations estimated by IW-CSGVA $(K=5)$ and GVA with MCMC. We can see that IW-CSGVA estimated both the means and standard deviations more accurately as compared to GVA. 

\subsection{New York stock exchange data}
Next we consider 2000 observations of the returns of the New York Stock Exchange (NYSE) from February 2, 1984 to December 31, 1991. This data is available as the dataset ``nyse" from the {\tt R} package {\tt astsa}. We consider 100 times the mean-corrected returns as responses $\{y_i\}$.

From Table \ref{tab_nyse}, CSGVA was able to attain a higher lower bound than GVA when initialized in the standard manner using $\lambda=0$. Applying IW-CSGVA led to further improvements as $K$ increases. For this massive data set, parallel computing yields significant reductions in computation time, by factors of 2.9, 4.5 and 5.5 corresponding to $K=5$, 20, 100 respectively.
\begin{table}[htb!]
\caption{NYSE data. Number of iterations $I$ (in thousands), runtimes (in seconds) and estimates of lower bound (standard deviation in brackets) of the variational methods.}
\label{tab_nyse}
\centering
\begin{tabular}{@{}lccccc@{}}
\hline\noalign{\smallskip}
& $K$ & $I$ & time & parallel & $\hat{\mL}_K^\IW$ \\  
\noalign{\smallskip}\hline\noalign{\smallskip}
GVA & 1 & 43 & 679.0 & - & -570.8 (1.8) \\
CSGVA & 1 & 49 & 749.2 & - & -570.7 (2.0) \\
\multirow{3}{*}{IW-CSGVA} & 5 & 1 & 76.0 & 26.1  & -569.4 (1.1) \\
& 20 & 1 & 305.0 & 67.9 & -569.0 (0.7) \\
& 100 & 1 & 1503.0 & 274.0 & -568.7 (0.4) \\
\noalign{\smallskip}\hline
\end{tabular}
\end{table}

Figure \ref{fig1_nyse} shows that the marginal posteriors estimated using CSGVA are quite close to that of MCMC while GVA underestimated the posterior variance of $\alpha$ and $\psi$ quite severely. Posteriors estimated by IW-CSGVA are virtually indistinguishable from MCMC.
\begin{figure}[htb!]
\centering
\includegraphics[width=0.49\textwidth]{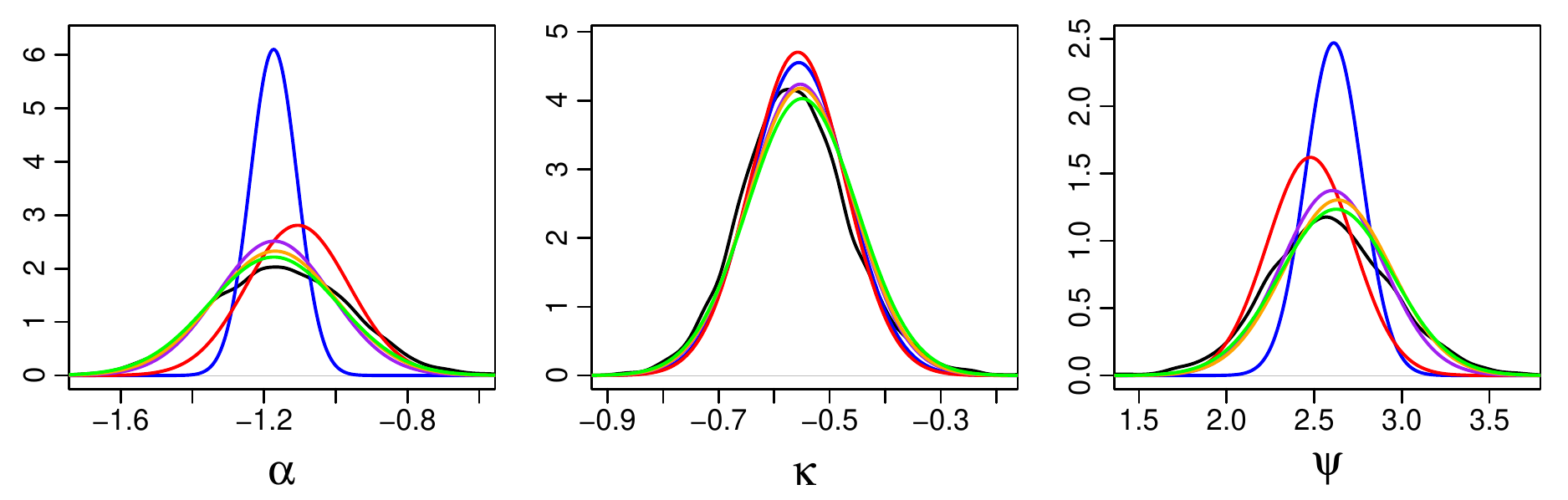}
\caption{NYSE data. Marginal posterior distributions of global parameters. Black (MCMC), blue (GVA), red (CSGVA), purple ($K=5$), orange ($K=20$), green ($K=100$).} \label{fig1_nyse}
\end{figure}

From Figure \ref{fig2_nyse}, we can see that the marginal posteriors of the latent states are also estimated very well using IW-CSGVA $(K=5)$ and there is no systematic underestimation of the posterior variance unlike the previous example. GVA captures the posterior means very well but did not perform as well as IW-CSGVA in estimating the posterior variance.
\begin{figure}[htb!]
\centering
\includegraphics[width=0.49\textwidth]{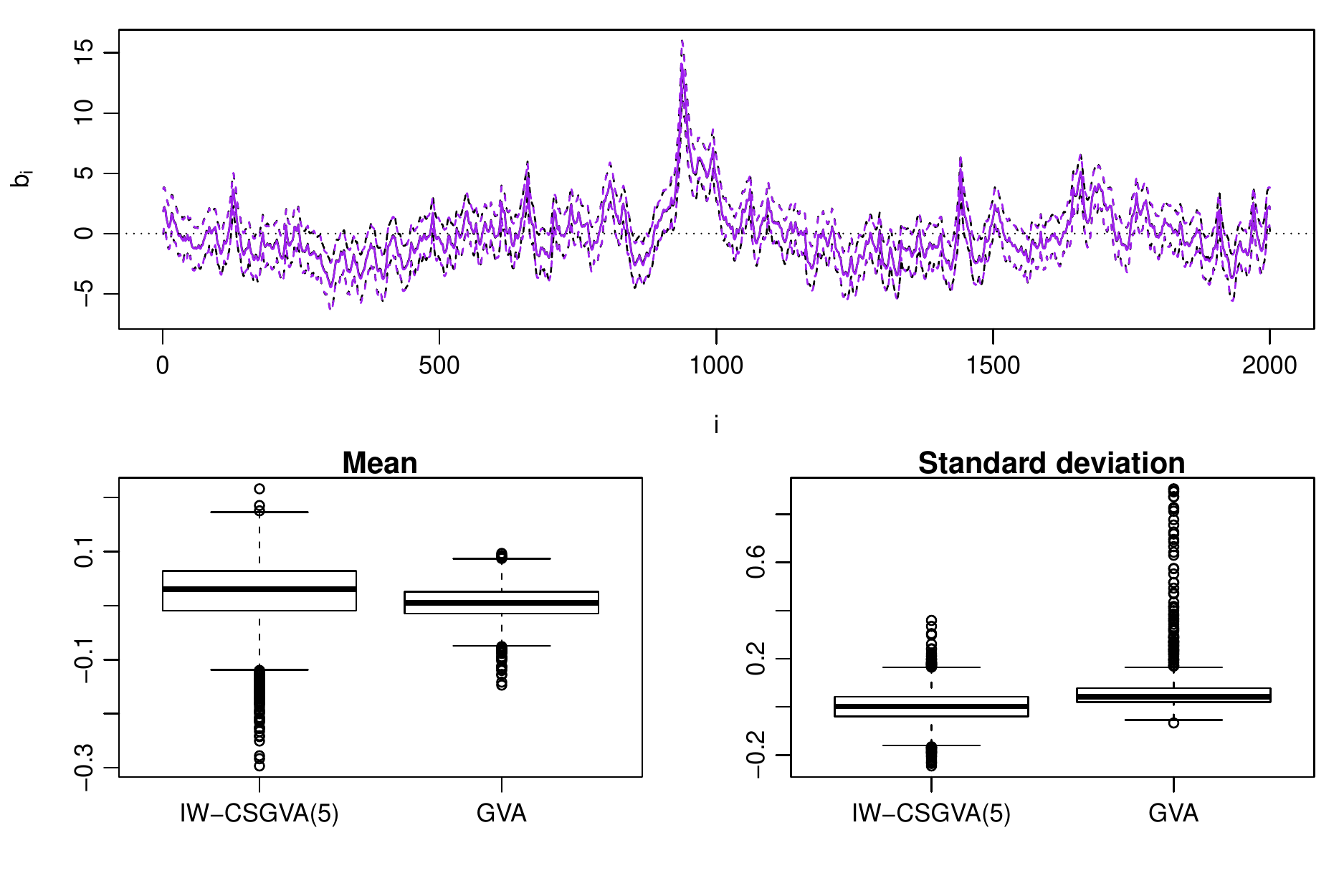}
\caption{NYSE data. Top: Posterior means (solid line) of the latent states and one standard deviation from the mean (dotted line) estimated using MCMC (black) and IW-CSGVA ($K=5$, purple). Bottom: Boxplots of $\text{mean}_{\text{MCMC}} - \text{mean}_{\text{VA}}$ and $\text{sd}_{\text{MCMC}} - \text{sd}_{\text{VA}}$.} \label{fig2_nyse}
\end{figure}

\section{Conclusion} \label{sec_conclu}
In this article, we have proposed a Gaussian variational approximation for hierarchical models that adopts a conditional structure $q(\theta) = q(\theta_G) q(\theta_L|\theta_G)$. The dependence of the local variables $\theta_L$ on global variables $\theta_G$ are then captured using a linear approximation. This structure is very useful when there are global scale parameters in $\theta_G$ which help to determine the scale of local variables in the conditional posterior of $\theta_L|\theta_G$. We further demonstrate how CSGVA can be improved by maximizing the importance weighted lower bound. From our experiments, using a $K$ as small as 5 can lead to significant improvements in the variational approximation, with just a short run. Moreover, for massive datasets, computation time can be further reduced through parallel computing. Our experiments indicate that CSGVA coupled with importance weighting is particularly useful in improving the estimation of the posterior variance of precision parameters $\omega$ in GLMMs, and the persistence $\phi$ and volatility $\sigma$ of the log-variance in SVMs.


\bibliographystyle{spbasic}      
\bibliography{ref}   

\begin{thebibliography}{48}
\providecommand{\natexlab}[1]{#1}
\providecommand{\url}[1]{{#1}}
\providecommand{\urlprefix}{URL }
\expandafter\ifx\csname urlstyle\endcsname\relax
  \providecommand{\doi}[1]{DOI~\discretionary{}{}{}#1}\else
  \providecommand{\doi}{DOI~\discretionary{}{}{}\begingroup
  \urlstyle{rm}\Url}\fi
\providecommand{\eprint}[2][]{\url{#2}}

\bibitem[{Archer et~al.(2016)Archer, Park, Buesing, Cunningham, and
  Paninski}]{Archer2016}
Archer E, Park IM, Buesing L, Cunningham J, Paninski L (2016) Black box
  variational inference for state space models. arXiv:1511.07367

\bibitem[{Blei et~al.(2017)Blei, Kucukelbir, and McAuliffe}]{blei2017}
Blei DM, Kucukelbir A, McAuliffe JD (2017) Variational inference: A review for
  statisticians. Journal of the American Statistical Association 112:859--877

\bibitem[{Breslow and Clayton(1993)}]{Breslow1993}
Breslow NE, Clayton DG (1993) Approximate inference in generalized linear mixed
  models. Journal of the American Statistical Association 88:9--25

\bibitem[{Burda et~al.(2016)Burda, Grosse, and Salakhutdinov}]{Burda2016}
Burda Y, Grosse R, Salakhutdinov R (2016) Importance weighted autoencoders. In:
  Proceedings of the 4th International Conference on Learning Representations
  (ICLR)

\bibitem[{Diggle et~al.(2002)Diggle, Heagerty, Liang, and Zeger}]{Diggle2002}
Diggle PJ, Heagerty P, Liang KY, Zeger SL (2002) The analysis of Longitudinal
  Data, 2nd edn. Oxford University Press, Oxford

\bibitem[{Dinh et~al.(2017)Dinh, Sohl-Dickstein, and Bengio}]{Dinh2017}
Dinh L, Sohl-Dickstein J, Bengio S (2017) Density estimation using real {NVP}.
  In: Proceedings of the 5th International Conference on Learning
  Representations (ICLR)

\bibitem[{Domke and Sheldon(2018)}]{Domke2018}
Domke J, Sheldon DR (2018) Importance weighting and variational inference. In:
  Bengio S, Wallach H, Larochelle H, Grauman K, Cesa-Bianchi N, Garnett R (eds)
  Advances in Neural Information Processing Systems 31, Curran Associates,
  Inc., pp 4470--4479

\bibitem[{Fitzmaurice and Laird(1993)}]{Fitzmaurice1993}
Fitzmaurice GM, Laird NM (1993) A likelihood-based method for analysing
  longitudinal binary responses. Biometrika 80:141--151

\bibitem[{Germain et~al.(2015)Germain, Gregor, Murray, and
  Larochelle}]{Germain2015}
Germain M, Gregor K, Murray I, Larochelle H (2015) {MADE: Masked autoencoder
  for distribution estimation}. In: Bach F, Blei D (eds) Proceedings of the
  32nd International Conference on Machine Learning, PMLR, Lille, France,
  Proceedings of Machine Learning Research, vol~37, pp 881--889

\bibitem[{Guo et~al.(2016)Guo, Wang, Broderick, and Dunson}]{Guo2016}
Guo F, Wang X, Broderick T, Dunson DB (2016) Boosting variational inference.
  arXiv: 1611.05559

\bibitem[{Han et~al.(2016)Han, Liao, Dunson, and Carin}]{Han2016}
Han S, Liao X, Dunson DB, Carin LC (2016) Variational {G}aussian copula
  inference. In: Gretton A, Robert CC (eds) Proceedings of the 19th
  International Conference on Artificial Intelligence and Statistics, JMLR
  Workshop and Conference Proceedings, vol~51, pp 829--838

\bibitem[{Hoffman and Blei(2015)}]{Hoffman2015}
Hoffman M, Blei D (2015) Stochastic structured variational inference. In:
  Lebanon G, Vishwanathan S (eds) Proceedings of the Eighteenth International
  Conference on Artificial Intelligence and Statistics, JMLR Workshop and
  Conference Proceedings, vol~38, pp 361--369

\bibitem[{Husz\'{a}r(2017)}]{Huszar17}
Husz\'{a}r F (2017) Variational inference using implicit distributions.
  arXiv:1702.08235

\bibitem[{Jaakkola and Jordan(1998)}]{Jaakkola1998}
Jaakkola TS, Jordan MI (1998) Improving the mean field approximation via the
  use of mixture distributions, Springer Netherlands, Dordrecht, pp 163--173

\bibitem[{Kastner and Fr{\"{u}}hwirth-Schnatter(2014)}]{Kastner2014}
Kastner G, Fr{\"{u}}hwirth-Schnatter S (2014) Ancillarity-sufficiency
  interweaving strategy ({ASIS}) for boosting {MCMC} estimation of stochastic
  volatility models. Computational Statistics and Data Analysis 76:408 -- 423

\bibitem[{Kingma and Ba(2015)}]{Kingma2015}
Kingma DP, Ba J (2015) Adam: A method for stochastic optimization. In:
  Proceedings of the 3rd International Conference on Learning Representations
  (ICLR)

\bibitem[{Kingma and Welling(2014)}]{Kingma2014}
Kingma DP, Welling M (2014) Auto-encoding variational {B}ayes. In: Proceedings
  of the 2nd International Conference on Learning Representations (ICLR)

\bibitem[{Kingma et~al.(2016)Kingma, Salimans, Jozefowicz, Chen, Sutskever, and
  Welling}]{Kingma2016}
Kingma DP, Salimans T, Jozefowicz R, Chen X, Sutskever I, Welling M (2016)
  Improved variational inference with inverse autoregressive flow. In: Lee DD,
  Sugiyama M, Luxburg UV, Guyon I, Garnett R (eds) Advances in Neural
  Information Processing Systems 29, Curran Associates, Inc., pp 4743--4751

\bibitem[{Li and Turner(2016)}]{Li2016}
Li Y, Turner RE (2016) R{\'e}nyi divergence variational inference. In:
  Proceedings of the 30th International Conference on Neural Information
  Processing Systems, Curran Associates Inc., USA, NIPS'16, pp 1081--1089

\bibitem[{Maddison et~al.(2017)Maddison, Lawson, Tucker, Heess, Norouzi, Mnih,
  Doucet, and Teh}]{Maddison2017}
Maddison CJ, Lawson J, Tucker G, Heess N, Norouzi M, Mnih A, Doucet A, Teh Y
  (2017) Filtering variational objectives. In: Guyon I, Luxburg UV, Bengio S,
  Wallach H, Fergus R, Vishwanathan S, Garnett R (eds) Advances in Neural
  Information Processing Systems 30, Curran Associates, Inc., pp 6573--6583

\bibitem[{Magnus and Neudecker(1980)}]{Magnus1980}
Magnus JR, Neudecker H (1980) The elimination matrix: Some lemmas and
  applications. SIAM Journal on Algebraic Discrete Methods 1:422--449

\bibitem[{Magnus and Neudecker(1999)}]{Magnus1999}
Magnus JR, Neudecker H (1999) Matrix differential calculus with applications in
  statistics and econometrics, 3rd edn. John Wiley \& Sons, New York

\bibitem[{Miller et~al.(2016)Miller, Foti, and Adams}]{Miller2016}
Miller AC, Foti N, Adams RP (2016) Variational boosting: Iteratively refining
  posterior approximations. arXiv: 1611.06585

\bibitem[{Minka(2005)}]{Minka2005}
Minka T (2005) Divergence measures and message passing. Tech. rep.

\bibitem[{Ormerod and Wand(2010)}]{Ormerod2010}
Ormerod JT, Wand MP (2010) Explaining variational approximations. The American
  Statistician 64:140--153

\bibitem[{Papaspiliopoulos et~al.(2003)Papaspiliopoulos, Roberts, and
  Sk{\"o}ld}]{Papaspiliopoulos2003}
Papaspiliopoulos O, Roberts GO, Sk{\"o}ld M (2003) Non-centered
  parameterisations for hierarchical models and data augmentation. In: Bernardo
  JM, Bayarri MJ, Berger JO, Dawid AP, Heckerman D, Smith AFM, West M (eds)
  Bayesian Statistics 7, Oxford University Press, New York, pp 307--326

\bibitem[{Papaspiliopoulos et~al.(2007)Papaspiliopoulos, Roberts, and
  Sk{\"o}ld}]{Papaspiliopoulos2007}
Papaspiliopoulos O, Roberts GO, Sk{\"o}ld M (2007) A general framework for the
  parametrization of hierarchical models. Statist Sci 22:59--73

\bibitem[{Ranganath et~al.(2016)Ranganath, Tran, and Blei}]{Ranganath2016}
Ranganath R, Tran D, Blei DM (2016) Hierarchical variational models. In: Balcan
  MF, Weinberger KQ (eds) Proceedings of The 33rd International Conference on
  Machine Learning, JMLR Workshop and Conference Proceedings, vol~37, pp
  324--333

\bibitem[{Regli and Silva(2018)}]{Regli2018}
Regli JB, Silva R (2018) {Alpha-Beta divergence for variational inference}.
  arXiv: 1805.01045

\bibitem[{R\'{e}nyi(1961)}]{Renyi1961}
R\'{e}nyi A (1961) On measures of entropy and information. In: Proceedings of
  the Fourth Berkeley Symposium on Mathematical Statistics and Probability,
  Volume 1: Contributions to the Theory of Statistics, University of California
  Press, Berkeley, Calif., pp 547--561

\bibitem[{Rezende and Mohamed(2015)}]{Rezende2015}
Rezende DJ, Mohamed S (2015) Variational inference with normalizing flows. In:
  Bach F, Blei D (eds) Proceedings of The 32nd International Conference on
  Machine Learning, JMLR Workshop and Conference Proceedings, vol~37, pp
  1530--1538

\bibitem[{Rezende et~al.(2014)Rezende, Mohamed, and Wierstra}]{Rezende2014}
Rezende DJ, Mohamed S, Wierstra D (2014) Stochastic backpropagation and
  approximate inference in deep generative models. In: Xing EP, Jebara T (eds)
  Proceedings of The 31st International Conference on Machine Learning, JMLR
  Workshop and Conference Proceedings, vol~32, pp 1278--1286

\bibitem[{Roeder et~al.(2017)Roeder, Wu, and Duvenaud}]{Roeder2017}
Roeder G, Wu Y, Duvenaud DK (2017) Sticking the landing: Simple, lower-variance
  gradient estimators for variational inference. In: Guyon I, Luxburg U, Bengio
  S, Wallach H, Fergus R, Vishwanathan S, Garnett R (eds) Advances in Neural
  Information Processing Systems 30

\bibitem[{Rothman et~al.(2010)Rothman, Levina, and Zhu}]{Rothman2010}
Rothman AJ, Levina E, Zhu J (2010) A new approach to {C}holesky-based
  covariance regularization in high dimensions. Biometrika 97:539--550

\bibitem[{Salimans and Knowles(2013)}]{Salimans2013}
Salimans T, Knowles DA (2013) Fixed-form variational posterior approximation
  through stochastic linear regression. Bayesian Analysis 8:837--882

\bibitem[{Smith et~al.(2019)Smith, Loaiza-Maya, and Nott}]{Smith2019}
Smith MS, Loaiza-Maya R, Nott DJ (2019) High-dimensional copula variational
  approximation through transformation. arXiv:1904.07495

\bibitem[{Spall(2003)}]{Spall2003}
Spall JC (2003) Introduction to stochastic search and optimization: estimation,
  simulation and control. Wiley, New Jersey

\bibitem[{Spantini et~al.(2018)Spantini, Bigoni, and Marzouk}]{Spantini2018}
Spantini A, Bigoni D, Marzouk Y (2018) Inference via low-dimensional couplings.
  Journal of Machine Learning Research 19:1--71

\bibitem[{Tan(2017)}]{Tan2017}
Tan LSL (2017) Efficient data augmentation techniques for gaussian state space
  models. arXiv:1712.08887

\bibitem[{Tan(2018)}]{Tan2018b}
Tan LSL (2018) Use of model reparametrization to improve variational {B}ayes.
  arXiv:1805.07267

\bibitem[{Tan and Nott(2013)}]{Tan2013}
Tan LSL, Nott DJ (2013) Variational inference for generalized linear mixed
  models using partially non-centered parametrizations. Statistical Science
  28:168--188

\bibitem[{Tan and Nott(2018)}]{Tan2018}
Tan LSL, Nott DJ (2018) Gaussian variational approximation with sparse
  precision matrices. Statistics and Computing 28:259--275

\bibitem[{Thall and Vail(1990)}]{Thall1990}
Thall PF, Vail SC (1990) Some covariance models for longitudinal count data
  with overdispersion. Biometrics 46:657--671

\bibitem[{Titsias and L{\'a}zaro-Gredilla(2014)}]{Titsias2014}
Titsias M, L{\'a}zaro-Gredilla M (2014) Doubly stochastic variational {B}ayes
  for non-conjugate inference. In: Proceedings of the 31st International
  Conference on Machine Learning (ICML-14), pp 1971--1979

\bibitem[{Tran et~al.(2015)Tran, Blei, and Airoldi}]{tran+ba15}
Tran D, Blei DM, Airoldi EM (2015) Copula variational inference. In: Advances
  in Neural Information Processing Systems 28: Annual Conference on Neural
  Information Processing Systems 2015, December 7-12, 2015, Montreal, Quebec,
  Canada, pp 3564--3572

\bibitem[{Tucker et~al.(2018)Tucker, Lawson, Gu, and Maddison}]{Tucker2018}
Tucker G, Lawson D, Gu S, Maddison CJ (2018) Doubly reparametrized gradient
  estimators for {M}onte {C}arlo objectives. arXiv: 1810.04152

\bibitem[{{van Erven} and {Harremos}(2014)}]{vanErven2014}
{van Erven} T, {Harremos} P (2014) Rényi divergence and kullback-leibler
  divergence. IEEE Transactions on Information Theory 60:3797--3820

\bibitem[{Yang et~al.(2019)Yang, Pati, and Bhattacharya}]{Yang2019}
Yang Y, Pati D, Bhattacharya A (2019) $\alpha$-variational inference with
  statistical guarantees. Annals of Statistics To appear

\end{thebibliography}

\appendix

\section{Derivation of stochastic gradient}
We have 
\begin{equation*}
r_\lambda(s) = \begin{bmatrix} \theta_G \\ \theta_L \end{bmatrix} 
= \begin{bmatrix}
\mu_1 + C_1^{-T} s_1 \\ 
d+ C_2^{-T} (s_2 - DC_1^{-T} s_1)
\end{bmatrix},
\end{equation*}
where $\vech(C_2^*)  = f + F(\mu_1 + C_1^{-T} s_1)$. Differentiating $r_\lambda(s)$ with respect to $\lambda$, $\nabla_\lambda r_\lambda (s) $ is given by

\begin{equation*}
\begin{aligned}
&\begin{bmatrix}
\nabla_{\mu_1} \theta_G & \nabla_{\mu_1} \theta_L \\ 
\nabla_{\vech(C_1^*)} \theta_G & \nabla_{\vech(C_1^*)} \theta_L \\ 
\nabla_{d} \theta_G & \nabla_{d} \theta_L \\ 
\nabla_{\VEC(D)} \theta_G & \nabla_{\VEC(D)} \theta_L \\ 
\nabla_{f} \theta_G & \nabla_{f} \theta_L \\ 
\nabla_{\VEC(F)} \theta_G & \nabla_{\VEC(F)} \theta_L \\ 
\end{bmatrix}.
\end{aligned}
\end{equation*}
Since $\theta_G$ does not depend on $d$, $D$, $f$ and $F$, we have
\begin{equation*}
\begin{gathered}
\nabla_{d} \theta_G = 0_{nL \times G}, \quad \nabla_{\VEC(D)} \theta_G = 0_{nLG \times G} \\
\nabla_{f} \theta_G = 0_{nL(nL+1)/2 \times G}, \;\; \nabla_{\VEC(F)} \theta_G = 0_{nLG(nL+1)/2 \times G}.
\end{gathered}
\end{equation*}
It is easy to see that $\nabla_{\mu_1} \theta_G = I_G$ and $\nabla_d \theta_L = I_{nL}$. The rest of the terms are derived as follows. 

Differentiating $\theta_G$ with respect to $\vech(C_1^*)$,
\begin{equation*}
\begin{aligned}
\diff \theta_G &= -C_1^{-T} \diff(C_1^T) C_1^{-T} s_1\\
&= - (s_1^T C_1^{-1} \otimes C_1^{-T}) K_G E_G^T D_1^* \diff \vech(C_1^*) \\
&= - (C_1^{-T} \otimes s_1^T C_1^{-1}) E_G^T D_1^* \diff \vech(C_1^*).
\end{aligned}
\end{equation*}
\begin{equation*}
\begin{aligned}
\therefore \; \nabla_{\vech(C_1^*)} \theta_G & = - D_1^* E_G (C_1^{-1} \otimes C_1^{-T} s_1 ).
\end{aligned}
\end{equation*}

Differentiating $\theta_L$ with respect to $f$,
\begin{equation*}
\begin{aligned}
\diff \theta_L &= -C_2^{-T} \diff (C_2^T) C_2^{-T} (s_2 - DC_1^{-T} s_1) \\
&= - \{ (s_2 - DC_1^{-T} s_1)^T C_2^{-1} \otimes C_2^{-T} \} \\
& \quad \times K_{nL} E_{nL}^T D_2^* \diff f 
\end{aligned}
\end{equation*}
\begin{equation*}
\therefore \; \nabla_{f} \theta_L = - D_2^* E_{nL} \{ C_2^{-1} \otimes C_2^{-T} (s_2 - DC_1^{-T} s_1) \}.
\end{equation*}

Differentiating $\theta_L$ with respect to $F$, 
\begin{equation*}
\begin{aligned}
\diff \theta_L &= (\nabla_{f} \theta_L)^T \diff F \theta_G \\
&= \{ \theta_G^T \otimes (\nabla_{f} \theta_L)^T \} \diff \VEC(F). 
\end{aligned}
\end{equation*}
\begin{equation*}
\therefore \; \nabla_{\VEC(F)} \theta_L = \theta_G \otimes \nabla_{f} \theta_L.
\end{equation*}

Differentiating $\theta_L$ with respect to $D$, 
\begin{equation*}
\begin{aligned}
\diff \theta_L &= -C_2^{-T} \diff D C_1^{-T} s_1 \\
&= - (s_1^T C_1^{-1} \otimes C_2^{-T}) \diff \VEC(D).
\end{aligned}
\end{equation*}
\begin{equation*}
\therefore \; \nabla_{\VEC(D)} \theta_L =  - (C_1^{-T} s_1 \otimes C_2^{-1}).
\end{equation*}

Differentiating $\theta_L$ with respect to $\mu_1$, 
\begin{equation*}
\begin{aligned}
\diff \theta_L &= (\nabla_{f} \theta_L)^T F \diff \mu_1 \\ 
\end{aligned}
\end{equation*}
\begin{equation*}
\therefore \;  \nabla_{\mu_1} \theta_L = F^T (\nabla_{f} \theta_L).
\end{equation*}

Differentiating $\theta_L$ with respect to $\vech(C_1)$, 
\begin{equation*}
\begin{aligned}
\diff \theta_L &= -C_2^{-T}\diff (C_2^T) C_2^{-T}(s_2 - DC_1^{-T} s_1) \\
& \quad - C_2^{-T} D \diff (C_1^{-T}) s_1 \\
& = (\nabla_{f} \theta_L)^T F\diff (C_1^{-T}) s_1 - C_2^{-T} D \diff (C_1^{-T}) s_1 \\
& = \{  (\nabla_{f} \theta_L)^T F -  C_2^{-T} D\} (\nabla_{\vech(C_1^*)} \theta_G )^T \diff \vech(C_1^*) \\
\end{aligned}
\end{equation*}
\begin{equation*}
\begin{aligned}
\therefore \; \nabla_{\vech(C_1^*)} \theta_L &= \nabla_{\vech(C_1^*)}\theta_G  \{  F^T \nabla_{f} \theta_L -  D^T C_2^{-1} \}  \\
&= \nabla_{\vech(C_1^*)}\theta_G  \{  \nabla_{\mu_1} \theta_L -  D^T C_2^{-1} \}.
\end{aligned}
\end{equation*}

Since $s_1 = C_1^T(\theta_G - \mu_1)$ and $s_2 = C_2^T (\theta_L - \mu_2)$, we have
\begin{equation*}
\begin{aligned}
&\log q_\lambda (\theta) = \log q(\theta_G) + \log q(\theta_L|\theta_G) \\
&= -\frac{G}{2} \log(2\pi) + \log |C_1| - \frac{1}{2}(\theta_G - \mu_1)^T C_1 C_1^T (\theta_G - \mu_1) \\
& \quad -\frac{nL}{2} \log(2\pi) + \log |C_2| - \frac{1}{2}(\theta_L - \mu_2)^T C_2 C_2^T (\theta_L - \mu_2) \\
& = - \frac{nL+G}{2} \log (2\pi) + \log |C_1C_2| - \frac{1}{2} s^T s.
\end{aligned}
\end{equation*}

As $\mu_2 = d + C_2^{-T} D(\mu_1 - \theta_G)$ and $\vech(C_2^*) = f + F \theta_G$, differentiating $\log q_\lambda (\theta)$ with respect to $\theta_G$,
\begin{equation*}
\begin{aligned}
&\diff \log q_\lambda (\theta) \\
&= - (\theta_G - \mu_1)^T C_1 C_1^T \diff \theta_G 
-  (\theta_L - \mu_2)^T C_2 C_2^T(-\diff \mu_2)\\
& \quad - (\theta_L - \mu_2)^T \diff C_2 s_2 + \tr(C_2^{-1} \diff C_2) \\
&= -s_1^T C_1^T \diff \theta_G + s_2^T C_2^T\{- C_2^{-T} D \diff \theta_G + \diff (C_2^{-T}) D(\mu_1 - \theta_G)\} \\
& \quad - \VEC(C_2^{-T} s_2 s_2^T)^T \diff \VEC(C_2) + \VEC(C_2^{-T})^T d\VEC(C_2) \\
&= \VEC(C_2^{-T} - \{C_2^{-T} s_2 + (\mu_2 -d)\}s_2^T)^T d\VEC(C_2) \\
& \quad -s_1^T C_1^T \diff \theta_G - s_2^T D \diff \theta_G  \\
&= \VEC(C_2^{-T} - (\theta_L -d) s_2^T)^T E_{nL}^T D_2^* F \diff \theta_G  \\
& \quad -s_1^T C_1^T \diff \theta_G - s_2^T D \diff \theta_G.
\end{aligned}
\end{equation*}
Therefore
\begin{equation*}
\begin{aligned}
\nabla_{\theta_G} \log q_\lambda (\theta) 
&=F^T D_2^* \vech(C_2^{-T} - (\theta_L -d) s_2^T) \\
& \quad - C_1 s_1 - D^T s_2.
\end{aligned}
\end{equation*}
Note that $D_2^* \vech(C_2^{-T}) = \vech(I_{nL})$ as $C_2^{-T}$ is upper triangular and $\vech(C_2^{-T})$ only retains the diagonal elements of $C_2^{-T}$.

Differentiating $\log q_\lambda (\theta)$ with respect to $\theta_L$,
\begin{equation*}
\begin{aligned}
\diff \log q_\lambda (\theta) & = - (\theta_L - \mu_2)^T C_2 C_2^T \diff \theta_L \\
& = - s_2^T C_2^T \diff \theta_L.
\end{aligned}
\end{equation*}
\begin{equation*}
\therefore \; \nabla_{\theta_L} \log q_\lambda (\theta) = - C_2 s_2.
\end{equation*}

\section{Gradients for generalized linear mixed models}
Since $\theta = [\beta^T, \omega^T, \tilde{b}_1^T, \dots, \tilde{b}_n^T]^T$, we require 
\begin{multline*}
\nabla_\theta \log p(y, \theta) = [\nabla_\beta \log p(y, \theta), \nabla_\omega \log p(y, \theta), \\
\nabla_{\tilde{b}_1} \log p(y, \theta), \dots, \nabla_{\tilde{b}_n} \log p(y, \theta)]^T.
\end{multline*}
For the centered parametrization, the components in $\nabla_\theta \log p(y, \theta)$ are given below. Note that $\beta = [\beta_{RG_1}^T, \beta_{G_2}^T]^T$.

\begin{equation*}
\nabla_{\beta_{G_2}} \log p(y, \theta) = \sum_{i=1}^n  {X_i^{G_2}}^T \{ y_i - h'(\eta_i) \} - \beta_{G_2}/\sigma_\beta^2.
\end{equation*}

\begin{equation*}
\nabla_{\beta_{RG_1}} \log p(y, \theta) = \sum_{i=1}^n C_i^T W W^T (\tilde{b}_i - C_i \beta_{RG_1}) - \beta_{RG_1}/\sigma_\beta^2.
\end{equation*}

Differentiating $\log p(y, \theta) $ with respect to $\omega$,
\begin{equation*}
\begin{aligned}
\diff \log p(y, \theta) & = -\sum_{i=1}^n (\tilde{b}_i - C_i \beta_{RG_1})^T \diff W W^T (\tilde{b}_i - C_i \beta_{RG_1}) \\
& \quad + n \tr(W^{-1} \diff W) - \omega^T \diff \omega /\sigma_\omega^2 \\
& = \VEC\bigg\{  - \sum_{i=1}^n (\tilde{b}_i - C_i \beta_{RG_1}) (\tilde{b}_i - C_i \beta_{RG_1})^T W  \\
& \quad + nW^{-T}  \bigg\}^T  E_L^T D_L^* \diff \omega - \omega^T \diff \omega /\sigma_\omega^2,
\end{aligned}
\end{equation*}
where $\diff \vech(W) = D^*_L \diff \omega$ and $D^*_L = \diag\{ \vech(\dg(W) + \bone_L\bone_L^T - I_L) \}$. Hence
\begin{equation*}
\begin{aligned}
\nabla_\omega \log p(y, \theta) 
&=- D^*_L \sum_{i=1}^n  \vech\{ (\tilde{b}_i - C_i \beta_{RG_1}) (\tilde{b}_i - C_i \beta_{RG_1}) ^TW \}   \\
& \quad + n \vech(I_L) - \omega/\sigma_\omega^2.
\end{aligned}
\end{equation*}
Note that $D_L^* \vech(W^{-T}) = \vech(I_L)$ because $W^{-T}$ is upper triangular and $\vech(W^{-T})$ only retains the diagonal elements.

\begin{equation*}
\nabla_{\tilde{b}_i} \log p(y, \theta) =  Z_i^T \{ y_i - h'(\eta_i)\} - W W^T (\tilde{b}_i - C_i \beta_{RG_1}).
\end{equation*}

\section{Gradients for state space models}
Since $\theta = [\alpha, \kappa, \psi, b_1^T, \dots, b_n^T]^T$, we require 
\begin{multline*}
\nabla_\theta \log p(y, \theta) = [\nabla_\alpha \log p(y, \theta), \nabla_\kappa \log p(y, \theta), \\
\nabla_\psi \log p(y, \theta), \nabla_{b_1} \log p(y, \theta), \dots, \nabla_{b_n} \log p(y, \theta)]^T.
\end{multline*}
The components in $\nabla_\theta \log p(y, \theta)$ are given below.

\begin{equation*}
\nabla_\alpha \log p(y, \theta) = \frac{1}{2}\sum^n_{i=1} (b_i y_i^2 \e^{-\sigma b_i - \kappa} - b_i)(1-\e^{-\sigma})  - \frac{\alpha}{\sigma_{\alpha}^2}.
\end{equation*}

\begin{equation*}
\nabla_\kappa \log p(y, \theta) = \frac{1}{2} \bigg(\sum^n_{i=1} y_i^2 \e^{-\sigma b_i - \kappa} - n \bigg) - \kappa/\sigma_{\kappa}^2.
\end{equation*}

\begin{equation*}
\begin{aligned}
\nabla_\psi \log p(y, \theta) &= \bigg\{ \sum_{i=2}^{n} (b_i - \phi b_{i-1})b_{i-1}  + b_1^2 \phi  - \frac{\phi}{1-\phi^2} \bigg\} \\
& \quad \times \phi(1-\phi)  - \psi/\sigma_{\psi}^2.
\end{aligned}
\end{equation*}

\begin{equation*}
\nabla_{b_1} \log p(y, \theta) = \frac{\sigma}{2} (y_1^2 \e^{-\sigma b_1 - \kappa} - 1) + \phi (b_2 - \phi b_1)- b_1 (1-\phi)^2.
\end{equation*}

For $2 \leq i \leq n-1$,
\begin{equation*}
\begin{aligned}
\nabla_{b_i} \log p(y, \theta) &= \frac{\sigma}{2} (y_i^2 \e^{-\sigma b_i - \kappa} - 1) +\phi  (b_{i+1} - \phi b_i)\\
& \quad  -  (b_i - \phi b_{i-1}).
\end{aligned}
\end{equation*}

\begin{equation*}
\begin{aligned}
\nabla_{b_n} \log p(y, \theta) = \frac{\sigma}{2} (y_n^2 \e^{-\sigma b_n - \kappa} - 1) -  (b_n - \phi b_{n-1}).
\end{aligned}
\end{equation*}

\end{document}